\documentclass[12pt,a4paper]{article}
\usepackage[T1]{fontenc}
\usepackage[english]{babel}
\usepackage{graphicx}
\usepackage{subfigure}
\usepackage{geometry}
\usepackage{amsmath}
\usepackage{amssymb}
\usepackage{amsthm}
\usepackage{authblk}
\usepackage{mathtools}
\usepackage{caption}
\usepackage{ragged2e}
\usepackage{cite}
\usepackage{placeins}
\usepackage{indentfirst}
\usepackage{setspace}
\usepackage{enumerate}
\usepackage{enumitem}
\usepackage{varioref}
\usepackage{booktabs,tabularx}
\usepackage{color}
\usepackage{epstopdf}
\usepackage{longtable}
\usepackage{array}
\makeatletter \oddsidemargin  -.1in \evensidemargin -.1in
\title{\textbf{ Statistical inference of a competing risks model based on improved adaptive type-II progressive censored data under Gompertz lifetime distribution}}
\author[1]{Amlan \textbf{Dey}\thanks{Email: amlandey.mtb@gmail.com}}
\author[2]{Subhankar \textbf{Dutta}\thanks{Email: subhankar.dta@gmail.com}}
\author[3]{Suchandan \textbf{Kayal}\thanks{Email:  kayals@nitrkl.ac.in, suchandan.kayal@gmail.com}}
\affil[1,3]{Department of Mathematics, National Institute of
	        	Technology Rourkela, Rourkela-769008, India}
\affil[2]{Department of Mathematics, Bioinformatics, and Computer Applications, Maulana Azad National Institute of Technology Bhopal, Bhopal-462003, India}
\date{}
\newcommand{\bea}{\begin{eqnarray}}
	\newcommand{\eea}{\end{eqnarray}}

\newcommand{\bee}{\begin{eqnarray*}}
	\newcommand{\eee}{\end{eqnarray*}}

\newtheorem{thm}{Theorem}[section]

\newtheorem{lemma}{Lemma}[section]

\numberwithin{equation}{section}

\begin{document}
  \maketitle
  \begin{abstract}
  	This paper studies survival of certain bird species (Zebra Finches) under different food availability conditions. A competing risks model is studied under improved adaptive type-II progressive censoring scheme (IAT-II PCS) referring to this phenomena. Two independent competing causes of failure are considered where lifetime of these failures are assumed to follow Gompertz distribution with unknown scale and shape parameters. Maximum likelihood estimators (MLEs) of the unknown parameters are derived. It is established that they exist uniquely. Asymptotic confidence intervals (ACIs) are also constructed using asymptotic normality property of the MLE.  Bayes estimates are obtained with respect to both non-informative and informative priors under different loss functions. Highest posterior density (HPD) credible intervals are calculated. A Monte Carlo simulation study is conducted to compare the performance of the proposed estimates. Three optimality criteria are studied to obtain the optimal censoring scheme. Finally, a real life data set is analyzed for further illustrations.\\\\   
  	\textbf{Keywords:}  Competing risks data; IAT-II PCS; Maximum likelihood estimate; Bayes estimate; HPD credible interval;  Mean squared errors; Optimality.
  \end{abstract}
  \section{Introduction}
  Food accessibility plays a crucial role in population structure and life-history adaptation in natural communities since it is believed that increased accessibility to food increases fertility and mortality. Food availability correlates with various ecological parameters, therefore it remains ambiguous precisely how accessibility to food impacts the ability to reproduce and survive. Therefore, a surge in accessibility to food may have a major impact on mortality by altering the rate of predation with a minor impact on the rate of famine. Since animals are frequently more exposed when hunting and high food supply enables a decline in the amount of time invested scavenging a surge in food accessibility, for instance, is likely to lower the danger of malnutrition but also has an impact on the presence of predators. Therefore, a rise in accessibility to food may have a primary impact on survival by altering the extent
  of predatory behavior, with a minor impact on the rate of famine. Individual variability
  is influenced by environmental variables throughout all age groups but it is believed that
  environmental circumstances during development have a particularly significant role in shaping
  variations in longevity and, more broadly, adult health. Such different cases effect the
  mortality of animals which factors can be considered as risks factors in survival analysis.

  In survival analysis, experimental items may fail for multiple reasons. The experimenter documents both the failure time of the items and the specific causes of these failures. These causes compete to result in item's failure, a scenario called competing risks in the literature and it has been studied quite extensively by several researchers, for example, Kalbfleish and Prentice \cite{kalbfleisch2002statistical} and Lawless \cite{lawless2011statistical}. Data with competing risks include both the failure time and indicators denoting the causes of failure. Two common methods for analyzing this type of data are latent failure time model proposed by Cox \cite{cox1959analysis} and cause-specific hazard rate model proposed by Prentice et. al \cite{prentice1978analysis}. Competing risks studies have broad applications in reliability engineering and biomedical research. For example, a heart disease patient might die from various causes other than heart failure. Usually, the treatment failure has two events, say relapse or mortality due to treatment procedure in the studies of stem cell transplantation. These two events yield competing risks data although it is of interest to consider the time to the first event. In a study that examines the probability of staphylococcus infection during hospital admissions, treatment failure or death may occur from two events such as the event of interest is staphylococcus infection and the competing event is mortality before the said infection or other kind of infections. In traditional survival analysis, competing events are typically handled as right-censored observations; therefore, a competing risks framework is required when such events are present. For reference, one may go through Crowder \cite{crowder2012multivariate} for studies on various applications of competing risks model.
  
  Sometimes experimental items are withdrawn from the experiment before the failure-time arises due to multiple reasons such as enhanced lifetime of the experimental product, cost and time limitations. This results in censoring the available data. In life-testing experiment, censoring is implemented to reduce time consumption and lower the amount of lost items. Survival studies have made extensive use of type-I and type-II conventional censoring schemes. Chandrashekhar et al. \cite{chandrasekar2004exact} introduced generalized version of mixture of these two censoring schemes, which is known as generalized type-I and type-II hybrid censoring schemes. The primary disadvantage of conventional censoring schemes is their inability to remove experimental items before the termination of the experiment. To address this issue, Cohen \cite{cohen1963progressively} introduced progressive censoring scheme which allows us to remove survival items at various stages of an experiment. Significant advancements in statistical inference for certain lifetime distributions using these censoring schemes have been made in recent years. See, for example, H.K.T. Ng \cite{ng2005parameter}, Balakrishnan et. al \cite{balakrishnan2003point} and Rastogi and Tripathi \cite{rastogi2012estimating}. It is important to note that experiments using progressive type-II censoring schemes can sometimes take longer to reach a predetermined number of failures. To overcome this drawback,  Kundu and Joarder \cite{kundu2006analysis} introduced type-II progressive hybrid censoring scheme. Subsequently, many researchers have explored this scheme. See, for example, Kundu et. al \cite{kundu2009type},  Hemmati and Khorram \cite{hemmati2013statistical} and Dutta and Kayal \cite{dutta2022estimation}. In this censoring scheme, the experimental duration is predetermined, while the effective sample size is random. Occasionally, the effective sample size drops to zero, leading to a decline in the efficiency of statistical inference. To overcome this drawback, Ng et. al \cite{ng2009statistical} introduced the adaptive type-II progressive censoring scheme (AT-II PCS). One may refer to Cramer and LLiopoulos \cite{cramer2010adaptive},  Ye et. al \cite{ye2014statistical} and Almetwally et. al \cite{almetwally2020maximum} for more details on AT-II PCS. In AT-II PCS, the experiment concludes after obtaining a predetermined number of failure-time data, which can result in a lengthy experiment duration. Ng et. al \cite{ng2009statistical} highlighted that the AT-II PCS is only efficient for inferences when experiment duration is not a limiting factor. For highly reliable items, this approach may lead to excessively longer periods, making it unsuitable in situations where keeping the experiment time within a practical limit is crucial. To overcome this drawback, Yan et. al \cite{yan2021statistical} introduced improved adaptive type-II (IAT-II) progressive censoring scheme which is particularly beneficial when managing the experiment duration is crucial. This improved method extends several censoring techniques, including T-II PCS and AT-II PCS, while guaranteeing that experiments are completed within a set time frame, thereby addressing the problem of prolonged experiment duration. This paper analyzes the food availability data under IAT-II PCS where IAT-II PCS data is generated from the real data under competing risks framework. We refer Dutta and Kayal \cite{dutta2023inference} for more details on IAT-II PCS which is described below.
  

  Consider an experiment, where $n$ experimental items are taken. At the start of the experiment, we fix the number of failure, say $m$ and the progressive censoring scheme $(r_1,\cdots,r_m)$, where $r_m=n-m-\sum_{j=1}^{m-1}r_j$ and $r_j\geq 0$, for $j=1,\cdots,m$. We denote $X_{j:m:n}$ as the random lifetime of the $j$th failure item. Also, we take two time thresholds $t_1$ and $t_2$ before starting the test such that $0<t_1<t_2$. In this context, the first threshold, $t_{1}$ serves as a warning for the experiment duration. The second threshold, $t_2$ indicates the maximum duration for which the test will be conducted, regardless of whether the experimenter achieves $m$ failures. Here, we present three cases (described below) in Figure $1$ for better visualization.  
  \begin{align}
  	\nonumber \mbox{Case-I}~~~~& X_{m:m:n} < t_{1}<t_{2},\\
  	\nonumber \mbox{Case-II}~~~& X_{k_{1}:m:n}<t_{1}<X_{k_{1}+1:m:n}<\cdots<X_{m:m:n}<t_{2},\\
  	\nonumber \mbox{Case-III}~~& X_{k_{2}:m:n}<t_{2}<X_{k_{2}+1:m:n}<\cdots<X_{m:m:n}.
  \end{align}
  Under the above three cases, we get different sets of failure-times as 
  \begin{align}
  	\nonumber \mbox{Case-I}~~~& \{X_{1:m:n}<\cdots<X_{m:m:n}\},\\
  	\nonumber \mbox{Case-II}~~& \{X_{1:m:n}<\cdots<X_{k_{1}:m:n}<\cdots<X_{m:m:n}\},\\
  	\nonumber \mbox{Case-III}~& \{X_{1:m:n}<\cdots<X_{k_{1}:m:n}<\cdots<X_{k_{2}:m:n}\}.
  \end{align}
  We note that the experiment termination time is $t=min\{X_{m:m:n},t_{2}\}$ under IAT-II PCS.\\
  \begin{figure}[h!]
   \begin{center}
  	\includegraphics[height=3.6in]{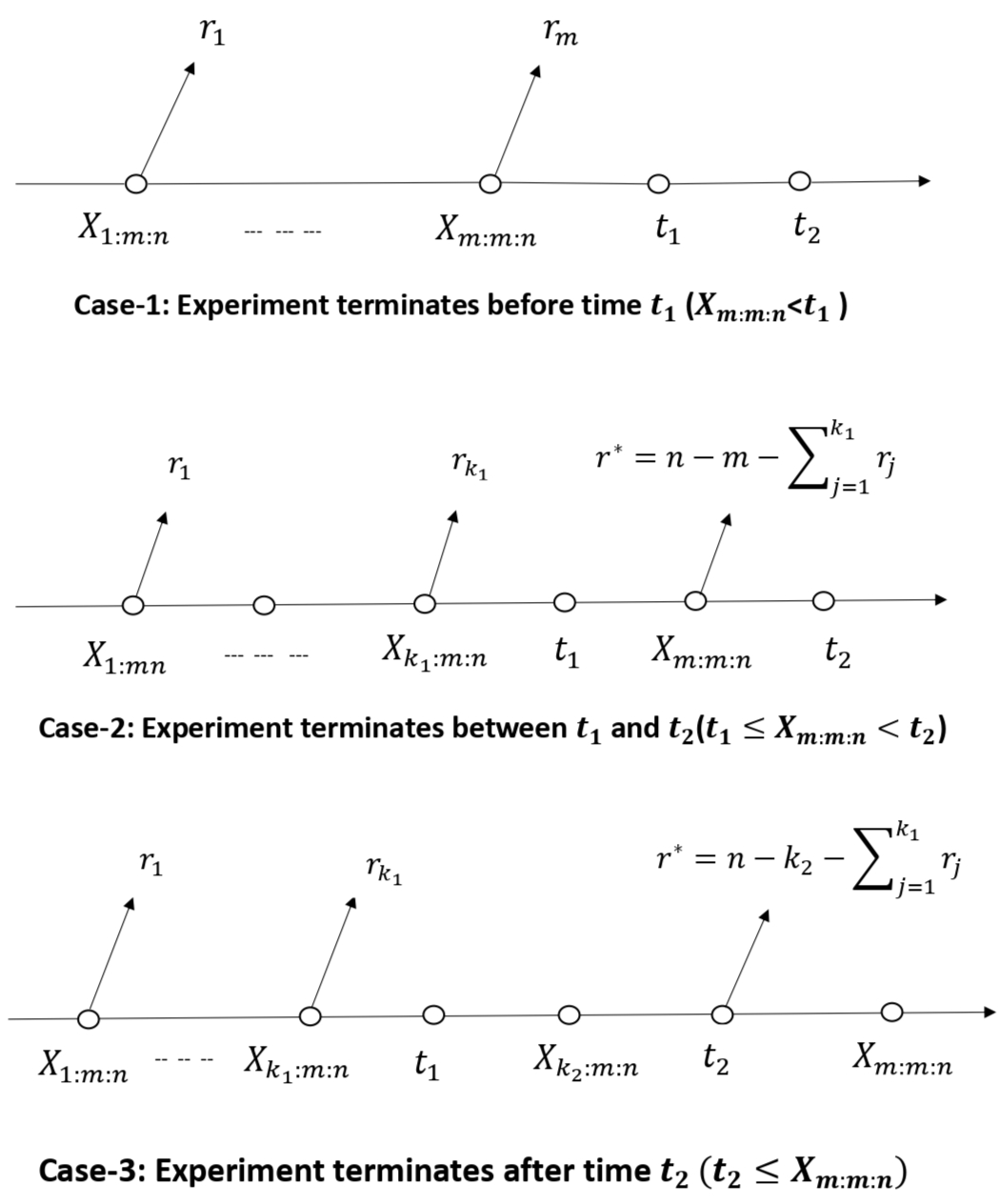}
  	\caption{ Illustrated diagram of the IAT-II PCS.}
  	\label{fig:fig1}
   \end{center}
  \end{figure}
  Numerous researchers have examined competing risks data under different censoring schemes. Kundu and Basu \cite{kundu2000analysis} studied Weibull competing risks in the presence of incomplete data among several groups. Kundu and Joarder \cite{kundu2006analysis} used type-II progressive hybrid censored data to analyze the competing risks model. Sarhan \cite{sarhan2007analysis} used a generalized exponential distribution to examine the competing risks model with censored and incomplete data. Cramer \cite{cramer2011progressively} used the Lomax distribution to examine progressively censored competing risks model. Chacko and Mohan \cite{chacko2019bayesian} studied Bayesian analysis of type-II progressively censored competing risks model with binomial removals.
  Ahmed et. al \cite{ahmed2020inference} studied the inference of progressive type-II competing risks model based on Chen distribution. Mahmoud et. al \cite{mahmoud2021inference} estimated parameters of generalized inverted exponential distribution based on progressively type-I censored competing risks data.  Nassr et. al \cite{nassr2021statistical} examined extended Weibull distributions under adaptive type-II (AT-II) progressively censored competing risks model.     
  
  As far as we know, till now, very few study has been done based on the improved adaptive type-II progressive censored competing risks model in the literature. We have utilized a recently introduced censoring scheme (IAT-II PCS) to examine competing risks data, where the lifetimes of the competing causes of failure follow Gompertz distributions with different shape and scale parameters. This distribution displays increasing failure rate models which is very adaptable for fitting with various survival and mortality data (See Figure \ref{fig:fig2}). The motivations for the Gompertz distribution is provided in upcoming section. Classical and Bayes estimates of the unknown model parameters of Gompertz distribution under IAT-II progressive censored sample and asymptotic confidence intervals are derived. The HPD credible intervals are also obtained. We have studied three different optimality criteria to determine the optimal censoring scheme among the chosen schemes. Finally, we have analyzed a real life dataset.  
  
  The paper is divided in certain sections as follows. In Section $2$, the competing risks model and the corresponding likelihood function under IAT-II PCS are introduced. In section $3$, the MLEs are derived and the existence and uniqueness properties of MLEs are studied. Also, asymptotic confidence intervals are constructed in this section. In Section $4$, a method to determine Bayes estimates is demonstrated for three different loss functions and HPD credible intervals for those estimates are discussed. In Section $5$, a hypothesis testing problem is formulated using likelihood ratio statistic. In Section $6$, a Monte Carlo simulation study is conducted to compare the performance of the proposed estimates and some inferences are made. Three different optimality criteria have been introduced in Section $7$. In Section $8$, a real-life dataset is analyzed for further illustrations of the proposed methods. Finally, in Section $9$, a conclusion has been drawn.

  \section{Model description and the likelihood function}
  In this section, the latent failure time model proposed by Cox \cite{cox1959analysis} is considered to study the competing risks data. It is assumed that two competing causes of failures are mutually independent. Let $X_{j}$ denote the lifetime of the $j$th item. Then, 
  \begin{align}
  	\nonumber X_{j}= min\{X_{1j},X_{2j}\},
  \end{align}
  where $X_{ij}$ $(i=1,2)$ is latent failure time of the $j$th item when the $i$th cause of failure occurs. Here, $X_{1j}$ and $X_{2j}$ are independent latent failure times. Consider ${\delta}_{j}=1$ and ${\delta}_{j}=0$ when the failure occurs due to cause 1 and cause 2, respectively. Then, $N_{1}=\sum_{j=1}^{N}I({\delta}_{j}=1)$ is the number of failures due to cause 1 and  $N_{2}=\sum_{j=1}^{N}I({\delta}_{j}=0)$ is the number of failures due to cause 2, where $N_{1}+N_{2}=N$.
  The cumulative distribution function (CDF), probability density function (PDF), survival function and hazard rate function (HRF) of the latent failure times $X_{ij}$, for $j=1,\cdots,n$ and $i=1,2$ are denoted by $F_{i}(\cdot)$, $f_{i}(\cdot)$, $S_{i}(\cdot)$ and $h_{i}(\cdot),$ respectively. Under these settings, the likelihood function is written as (see also Dutta and Kayal \cite{dutta2023inference})
  \begin{align}
  	 L({\alpha}_{1},{\alpha}_{2},\beta) \propto \bigg[f_{1}(x_{j}){S_{2}}(x_{j})\bigg]^{N_{1}} \bigg[f_{2}(x_{j}){S_{1}}(x_{j})\bigg]^{N_{2}} \prod_{j=1}^{N}\bigg[{S_{1}}(x_{j}){S_{2}}(x_{j})\bigg]^{r_{j}} \bigg[{S_{1}}(t){S_{2}}(t)\bigg]^{r^{*}}, \label{2.1}
  \end{align}
  where $r^*$ is described later and $\alpha_{1},{\alpha}_{2},\beta$ are the model parameters associated with the model under study. Using $f_{i}(x)=h_{i}(x)S_{i}(x)$, \eqref{2.1} changes into
  \begin{equation}
  	L({\alpha}_{1},{\alpha}_{2},\beta) \propto \bigg[h_{1}(x_{j})\bigg]^{N_{1}} \bigg[h_{2}(x_{j})\bigg]^{N_{2}} \prod_{j=1}^{N}\bigg[{S_{1}}(x_{j}){S_{2}}(x_{j})\bigg]^{1+r_{j}}\bigg[{S_{1}}(t){S_{2}}(t)\bigg]^{r^{*}},\label{2.2}
  \end{equation}
  where $x_{j}$ represents $j$th failure time $x_{j:m:n}$. Note that for case-I, $N=m$ and $r^{*}= 0$; for case-II, $N=k_{1}$ and $r^{*}= n-m-\sum_{j=1}^{k_{1}}r_{j}$, and for case-III, $N=k_{2}$ and $r^{*}= n-k_{2}-\sum_{j=1}^{k_{1}}r_{j}$. Here, we have assumed that lifetime of the competing causes of failure follows Gompertz distribution with PDF, CDF and hazard rate function (HRF), as follows
  \begin{equation}
  	f_{i}(x)={\alpha}_{i}e^{\beta x}e^{-\frac{\alpha_{i}}{\beta}(e^{\beta x}-1)},~ F_{i}(x)=1-e^{-\frac{\alpha_{i}}{\beta}(e^{\beta x}-1)}~~\mbox{and}~~h_{i}(x)={\alpha}_{i}e^{\beta x},\label{2.3}
  \end{equation}
  where $\beta,{\alpha}_{i}, x>0,$ for $i=1,2$.
  
  The PDF, CDF and HRF of the Gompertz distribution are plotted in Figure \ref{fig:fig2}. A per the Figure \ref{fig:fig2}, PDF of Gompertz distribution can be skewed to right and left by adjusting the values of $\alpha_{i}$ and $\beta$. The HRF of this distribution is increasing which implies that the failure rate of a component increases with respect to time. Due to this property, Gompertz distribution serves as a classical model for representing mortality patterns, especially in human and animal populations, where the risk of death tends to increase with age. Also in reliability studies, it helps to understand the failure rates of components or systems over time, particularly when the failure rate increases with time. Dey et. al \cite{dey2018statistical} studied the properties and estimation methods of Gompertz distribution. We refer to Wu et. al \cite{wu2006mle}, Soliman et. al\cite{soliman2015bayesian} and Wu et. al \cite{wu2017statistical} for more details on Gompertz distribution under different progressive censoring schemes.
  \begin{figure}[ht!]
  	\centering
  	\subfigure[]{\includegraphics[width=0.32\textwidth,height=0.31\textwidth]{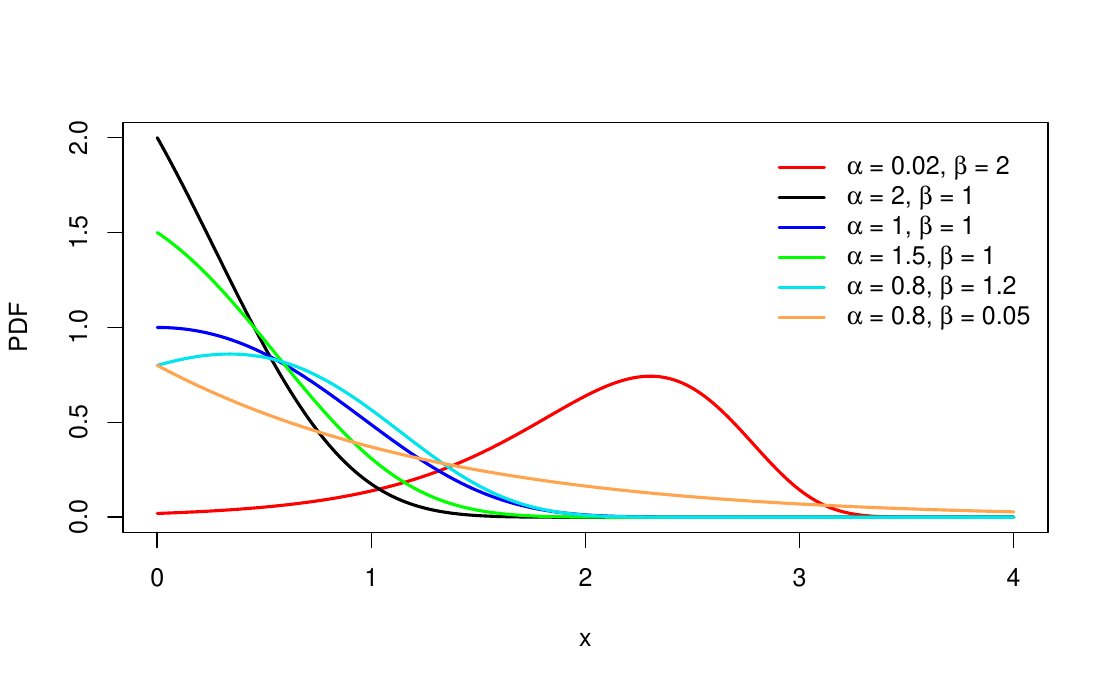}}
  	\subfigure[]{\includegraphics[width=0.32\textwidth,height=0.31\textwidth]{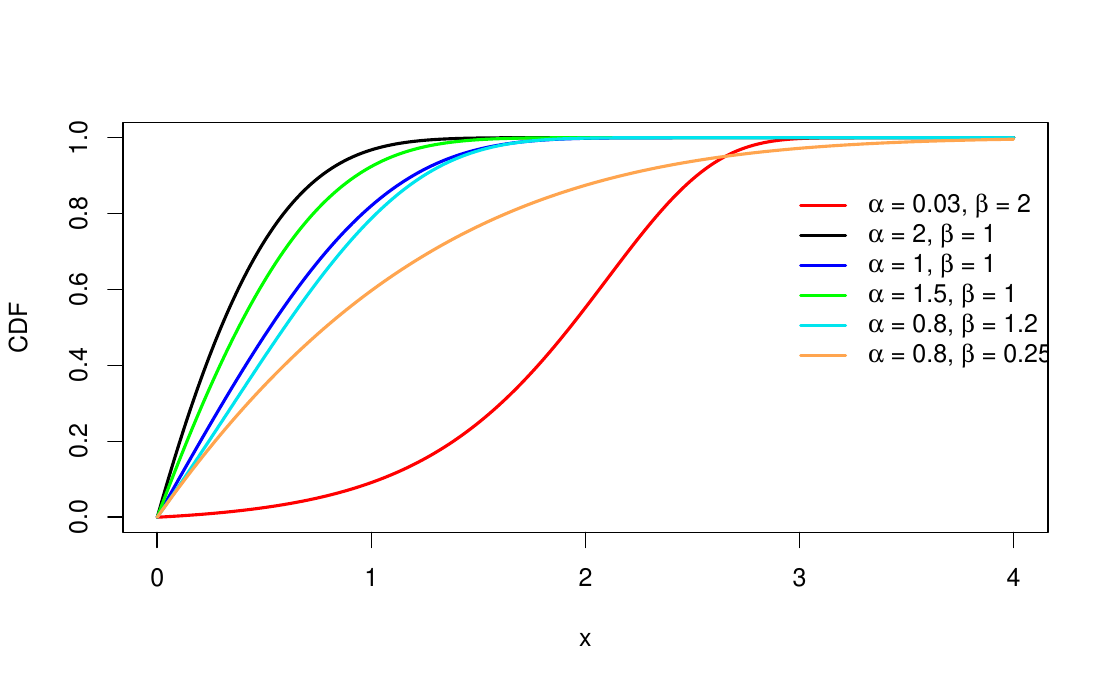}}
  	\subfigure[]{\includegraphics[width=0.32\textwidth,height=0.31\textwidth]{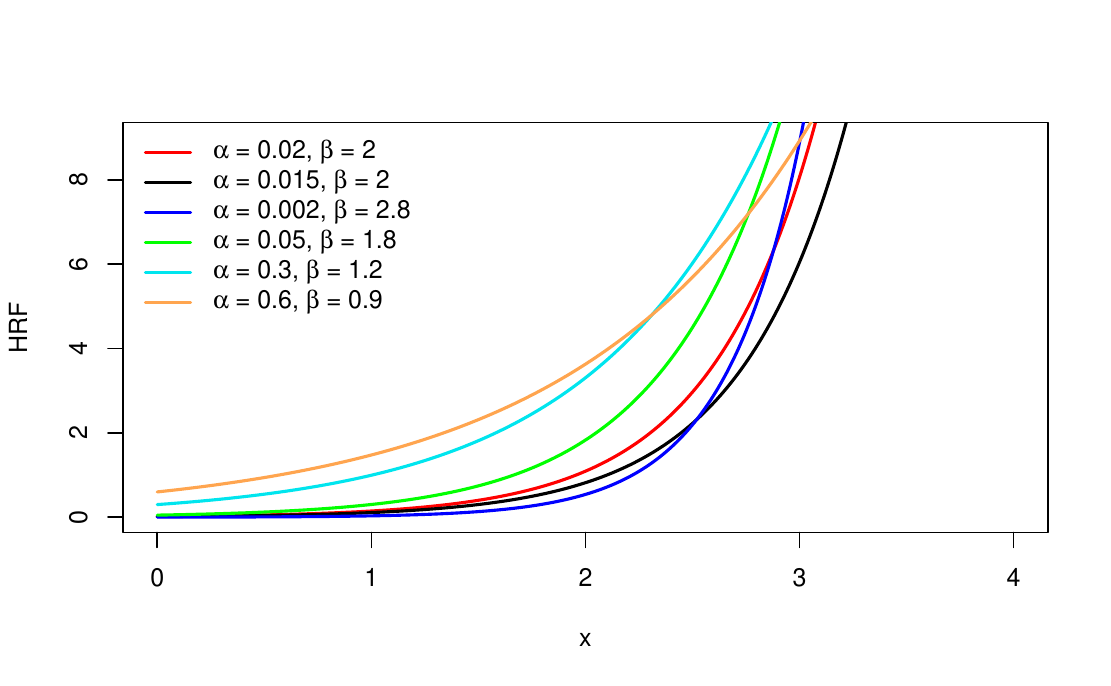}}
  	\caption{(a) PDF, (b) CDF, (c) HRF plots for different scale and shape parameters of the Gompertz distribution.}
  	\label{fig:fig2}
  \end{figure}
  
  \section{Maximum likelihood and interval estimations}
  In this section, we derive MLEs of ${\alpha}_{1}$, ${\alpha}_{2}$ and $\beta$. From \eqref{2.2} and \eqref{2.3}, the likelihood function is obtained as 
  \begin{equation}
  	L({\alpha}_{1},{\alpha}_{2},\beta) \propto~ {{\alpha}_{1}}^{N_{1}} {{\alpha}_{2}}^{N_{2}} \prod_{j=1}^{N} e^{\beta x_{j}} \Big[e^{-\frac{(\alpha_{1}+\alpha_{2})}{\beta}(1+r_{j})(e^{\beta x_{j}}-1)}\Big] e^{-\frac{(\alpha_{1}+\alpha_{2})}{\beta}t^*},\label{3.1}
  \end{equation}
  where $t^{*}=(e^{\beta t}-1)r^{*}$. Now, we can formulate the log-likelihood function from the likelihood function as
  \begin{equation}
  	\log L({\alpha}_{1},{\alpha}_{2},\beta) \propto N_{1} \log \alpha_{1} +  N_{2} \log \alpha_{2} + \sum_{j=1}^{N} \beta x_{j}-\frac{(\alpha_{1}+\alpha_{2})}{\beta}\bigg[ \sum_{j=1}^{N}(1+r_{j})(e^{\beta x_{j}}-1)+t^*\bigg].\label{3.2}
  \end{equation}
  After differentiating \eqref{3.2} with respect to ${\alpha}_{1}$, ${\alpha}_{2}$, $\beta$, and then equating them to zero, we obtain the likelihood equations, given by
  \begin{equation}
  	\frac{\partial l}{\partial {\alpha}_{1}}= \frac{N_{1}}{\alpha_{1}}-\frac{1}{\beta}\bigg[\sum_{j=1}^{N} (1+r_{j})(e^{\beta x_{j}}-1)+t^*\bigg],\label{3.3}
  \end{equation}
  \begin{equation}
  	\frac{\partial l}{\partial {\alpha}_{2}}= \frac{N_{1}}{\alpha_{2}}-\frac{1}{\beta}\bigg[\sum_{j=1}^{N} (1+r_{j})(e^{\beta x_{j}}-1)+t^*\bigg],\label{3.4}
  \end{equation}
  \begin{equation}
  	\frac{\partial l}{\partial \beta}= \sum_{j=1}^{N} x_{j} + \frac{(\alpha_{1}+\alpha_{2})}{\beta}\bigg[\frac{\sum_{j=1}^{N} (1+r_{j})(e^{\beta x_{j}}-1)+t^*}{\beta}-\sum_{j=1}^{N} (1+r_{j})x_{j}e^{\beta x_{j}}+r^*te^{\beta t}\bigg], \label{3.5}
  \end{equation}
  where $l= \log L({\alpha}_{1},{\alpha}_{2},\beta)$. From \eqref{3.3} and \eqref{3.4}, $\alpha_1$ and $\alpha_2$, for any given $\beta$, can be derived:
  \begin{equation}
  	{\alpha}_{1}= \frac{N_{1}\beta}{A(x,\beta)}~~~\mbox{and}~~~{\alpha}_{2}= \frac{N_{2}\beta}{A(x,\beta)}, \label{3.6}
  \end{equation}
  where $A(x,\beta)= \sum_{j=1}^{N} (r_{j}+1)(e^{\beta x_{j}}-1) +t^{*} $.\\
   
  From likelihood equation in \eqref{3.5}, the MLE of $\beta$, denoted by $\widehat{\beta}$, can be obtained by solving the following non-linear equation
  \begin{equation}
  	\frac{N}{\beta}+ \sum_{j=1}^{N} x_{j} -\frac{NB(x,\beta)}{A(x,\beta)}=0, \label{3.7}
  \end{equation}
  where $B(x,\beta)=\sum_{j=1}^{N}(1+r_{j})x_{j}e^{\beta x_{j}}+r^{*}te^{\beta t}$.\\
  
  From \eqref{3.6} and \eqref{3.7}, the MLEs of $\alpha_{1}$ and $\alpha_{2}$, denoted by $\widehat{{\alpha}_{1}}$ and $\widehat{{\alpha}_{2}},$ respectively, can be obtained, which are given by 
  \begin{equation}
  	\widehat{{\alpha}_{1}}= \frac{N_{1}\widehat{\beta}}{A(x,\widehat{\beta})}~~\mbox{and}~~\widehat{{\alpha}_{2}}= \frac{N_{2}\widehat{\beta}}{A(x,\widehat{\beta})}. \label{3.8}
  \end{equation}
  
  Since $\widehat{\beta}$ is the MLE of $\beta$, it is evident from \eqref{3.7} that $\widehat{\beta}$ satisfies the following fixed type equation
  \begin{equation}
  	\beta=g(\beta)=N\bigg[\frac{NB(x,\beta)}{A(x,\beta)}-\sum_{i=1}^{N} x_{i}\bigg]^{-1}. \label{3.9}
  \end{equation}
  Using fixed point iteration method proposed by Kundu \cite{kundu2007hybrid}, we can solve \eqref{3.10} for $\beta$.
  The following algorithm is useful to obtain $\widehat{\beta}$.\\
  ------------------------------------------------------------------------------------------------------------------------
  \textbf{Algorithm I}\\
  ------------------------------------------------------------------------------------------------------------------------
  \textbf{Step-1:} Take an initial guess  $\beta^{(0)}$ of $\beta$.\\
  \textbf{Step-2:} Obtain $\beta^{(1)}=g(\beta^{(0)})$.\\
  \textbf{Step-3:} Obtain $\beta^{(k)}$ from $\beta^{(k)}=g(\beta^{(k-1)})$.\\
  \textbf{Step-4:} Set $k=k+1$.\\
  \textbf{Step-5:} Terminate the process if  $|\beta^{(k+1)}-\beta^{(k)}|<\epsilon$, for a preferred tolerance limit $\epsilon$.\\
  \textbf{Step-6:} Take final result as $\widehat{\beta}$.\\
  ------------------------------------------------------------------------------------------------------------------------ 
  After obtaining $\widehat{\beta}$, the MLEs of $\alpha_{1}$ and $\alpha_{2}$ can be easily obtained from \eqref{3.8}, respectively denoted as $\widehat{{\alpha}_{1}}$ and $\widehat{{\alpha}_{2}}$.
  
  \subsection{Existence and uniqueness of MLEs}
 Here, we present a lemma, which will be useful to establish the existence and uniqueness of the MLEs, stated in the next theorem.
  \begin{lemma}
  	Let $x_{1}\leq\cdots\leq x_{j}\leq x_{j+1}\leq\cdots\leq x_{N}$ for $1\leq j\leq N $. Then, for all $s_{j}>0$, $1\leq j\leq N$, we have 
  	\begin{enumerate}[label=(\roman*)]
  		\item \begin{align}\nonumber
  		\lim_{\beta\rightarrow 0} \biggl\{\frac{1}{\beta}-\frac{\sum_{j=1}^{N} s_{j}x_{j}e^{\beta x_{j}}+tr^{*}e^{\beta t}}{\sum_{j=1}^{N}s_{j}(e^{\beta x_{j}}-1)+t^*}\biggr\}=-\frac{\sum_{j=1}^{N}s_{j}x^{2}_{j}+r^*t^{2}}{2\big(\sum_{j=1}^{N}s_{j}x_{j}+r^*t\big)};
  		\end{align}
  		\item 
  		
  			 \begin{eqnarray*}
  				\lim_{\beta\rightarrow \infty} \frac{\sum_{j=1}^{N}s_{j}x_{j}e^{\beta x_{j}}+r^*te^{\beta t}}{\sum_{j=1}^{N}s_{j}(e^{\beta x_{j}}-1)+t^*} =
  			\left\{
  			\begin{array}{ll}
  				x_{N},  ~~\mbox{for case I}\\
  				t,~~~~\mbox{for case II and case III}.
  			\end{array}
  			\right.
  		\end{eqnarray*}		
  	\end{enumerate}
  \end{lemma}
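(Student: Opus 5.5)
The plan is to treat both parts as statements about the single function
\begin{equation*}
D(\beta)=\sum_{j=1}^{N}s_{j}(e^{\beta x_{j}}-1)+r^{*}(e^{\beta t}-1),
\end{equation*}
which is exactly the denominator, since $t^{*}=(e^{\beta t}-1)r^{*}$. Differentiating term by term gives $D'(\beta)=\sum_{j=1}^{N}s_{j}x_{j}e^{\beta x_{j}}+r^{*}te^{\beta t}$, which is the numerator. So the expression in (i) is $\frac{1}{\beta}-\frac{D'(\beta)}{D(\beta)}=\frac{D(\beta)-\beta D'(\beta)}{\beta D(\beta)}$, and the ratio in (ii) is $D'(\beta)/D(\beta)$.

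For (i), I will write $M_{1}=\sum_{j}s_{j}x_{j}+r^{*}t$ and $M_{2}=\sum_{j}s_{j}x_{j}^{2}+r^{*}t^{2}$. Expanding $e^{\beta y}-1=\beta y+\beta^{2}y^{2}/2+O(\beta^{3})$ for fixed $y$ gives
\begin{equation*}
D(\beta)=\beta M_{1}+\tfrac{\beta^{2}}{2}M_{2}+O(\beta^{3}), \qquad D'(\beta)=M_{1}+\beta M_{2}+O(\beta^{2}).
\end{equation*}
The $O(\cdot)$ terms are uniform because there are finitely many terms. Then $D-\beta D'=-\frac{\beta^{2}}{2}M_{2}+O(\beta^{3})$ and $\beta D=\beta^{2}M_{1}+O(\beta^{3})$. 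Dividing numerator and denominator by $\beta^{2}$ and letting $\beta\to0$ gives $-M_{2}/(2M_{1})$. This uses $M_{1}>0$, which holds because $s_{j}>0$ and $x_{j}>0$. Equivalently, one could apply L'Hôpital's rule twice to the $0/0$ form, but the Taylor expansion is cleaner.

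For (ii), I will factor out the dominant exponential. The main step is to identify the largest exponent among the terms with nonzero coefficient.
\begin{itemize}
\item In Case I, $r^{*}=0$, so only the $x_{j}$ terms appear and the largest exponent is $x_{N}$. Let $J=\{j: x_{j}=x_{N}\}$, which allows for ties, and $S=\sum_{j\in J}s_{j}>0$. Dividing numerator and denominator by $e^{\beta x_{N}}$ gives numerator $\to x_{N}S$ and denominator $\to S$. All remaining terms, including the constants $-s_{j}$, are multiplied by $e^{-\beta(x_{N}-x_{j})}$ or $e^{-\beta x_{N}}$ and so vanish. The limit is therefore $x_{N}$.
\item In Cases II and III, all observed failures satisfy $x_{j}<t$ ($t=t_{1}$ in Case II, $t=t_{2}$ in Case III, with $x_{N}<t$ by the case definitions). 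Here I divide by $e^{\beta t}$. Every $x_{j}$ term decays like $e^{-\beta(t-x_{j})}\to0$, so the ratio tends to $r^{*}t/r^{*}=t$.
\end{itemize}

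The step I expect to be the real obstacle is the last one in (ii), because it needs $r^{*}>0$ in Cases II and III. I will argue this from the definition of $r^{*}$ as the number of surviving units removed at termination, which is positive whenever the test stops at a threshold with units still on test. I will state this as the standing assumption. In the degenerate situation $r^{*}=0$ the same argument shows the limit is $x_{N}$, exactly as in Case I.
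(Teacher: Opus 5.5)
Your proof is correct and follows essentially the paper's route. The Taylor expansion in (i) is the same computation as the paper's two applications of L'H\^{o}pital's rule to $(D-\beta D')/(\beta D)$. In (ii) the paper likewise divides by the dominant exponential $e^{\beta t}$, after a preliminary and unnecessary L'H\^{o}pital step.

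Your version is slightly more careful: it handles ties at $x_N$, and it makes explicit the assumption $r^{*}>0$ in Cases II and III, which the paper uses without stating. One small correction: in Case II the paper's $t$ is the termination time $X_{m:m:n}$, not $t_1$. This does not affect your argument, since $x_j<t$ still holds for all $j\le N$.
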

  \begin{proof}
  	\begin{enumerate}[label=(\roman*)]
  		\item For convenience, denote $T_{1}(\beta)=\frac{1}{\beta}-\frac{\sum_{j=1}^{N} s_{j}x_{j}e^{\beta x_{j}}+tr^{*}e^{\beta t}}{\sum_{j=1}^{N}s_{j}(e^{\beta x_{j}}-1)+t^*}$. We have
  		\begin{align}\nonumber
  		\begin{aligned}
  		\lim_{\beta\rightarrow 0} T_{1}(\beta)&=\lim_{\beta\rightarrow 0} \biggl\{\frac{\sum_{j=1}^{N}s_{j}(e^{\beta x_{j}}-1)+t^*-\beta\sum_{j=1}^{N} s_{j}x_{j}e^{\beta x_{j}}+\beta tr^{*}e^{\beta t}}{\beta\sum_{j=1}^{N}s_{j}(e^{\beta x_{j}}-1)+\beta t^*}\biggr\}\left(\frac{0}{0}~\mbox{form}\right)\\
  		&=-\lim_{\beta\rightarrow 0} \biggl\{\frac{\sum_{j=1}^{N} \beta s_{j}x^2_{j}e^{\beta x_{j}}+\beta t^2r^*e^{\beta t}}{\sum_{j=1}^{N}s_{j}\{(1+\beta x_{j})e^{\beta x_{j}}-1\}+r^*\{(1+\beta t)e^{\beta t}-1\}}\biggr\}\left(\frac{\infty}{\infty}~\mbox{form}\right)\\
  		&=-\lim_{\beta\rightarrow 0} \biggl\{\frac{\sum_{j=1}^{N}s_{j}x^2_{j}(1+\beta x_{j})e^{\beta x_{j}}+r^*t^2(1+\beta t)e^{\beta t}}{\sum_{j=1}^{N}s_{j}x_{j}(2+\beta x_{j})e^{\beta x_{j}}+r^*t(2+\beta t)e^{\beta t}}\biggr\}\\
  		&=-\frac{\sum_{j=1}^{N}s_{j}x^{2}_{j}+r^*t^{2}}{2\big(\sum_{j=1}^{N}s_{j}x_{j}+r^*t\big)},\\
  		\end{aligned}
  		\end{align}
  		where the second equality and third equality follow from L'Hospital's rule. This completes the proof of the first part.
  		\item We have
  		\begin{align}\nonumber
  		\begin{aligned}
  		\lim_{\beta\rightarrow \infty} \biggl\{\frac{\sum_{j=1}^{N}s_{j}x_{j}e^{\beta x_{j}}+r^*te^{\beta t}}{\sum_{j=1}^{N}s_{j}(e^{\beta x_{j}}-1)+t^*}\biggr\} \left(\frac{\infty}{\infty}~\mbox{form}\right) &= \lim_{\beta\rightarrow \infty} \biggl\{\frac{\sum_{j=1}^{N} s_{j}x^2_{j}e^{\beta x_{j}}+t^2r^*e^{\beta t}}{\sum_{j=1}^{N} s_{j}x_{j}e^{\beta x_{j}}+tr^{*}e^{\beta t}} \biggr\}\\
  		&=\lim_{\beta\rightarrow \infty} \biggl\{\frac{\sum_{j=1}^{N} s_{j}x^2_{j}e^{\beta(x_{j}-t)}+r^*t^2}{\sum_{j=1}^{N} s_{j}x_{j}e^{\beta(x_{j}-t)}+r^*t}\biggr\}\\
  		&=\left\{
  		\begin{array}{ll}
  			x_{N},  ~~\mbox{for case I}\\
  			t,~~~~\mbox{for case II and case III}.
  		\end{array}
  		\right.\\
  		\end{aligned}
  		\end{align}
  	\end{enumerate}
  	Hence, the proof follows.
  \end{proof}
  \begin{thm}
  	Let $x_{1}\leq\cdots\leq x_{j}\leq x_{j+1}\leq\cdots\leq x_{N}$ for  $1\leq j\leq N $. Then, the MLEs $\widehat{{\alpha}_{i}}$ and $\widehat{\beta}$ of the parameters $\alpha_{i}$ and $\beta$ exist and are unique, with
  	\begin{align}\nonumber
  	\widehat{{\alpha}_{i}}= \frac{N_{i}\widehat{\beta}}{A(x,\widehat{\beta})},~~\mbox{for}~~i=1,2
  	\end{align}
  	and $\widehat{\beta}$ is the solution of the following nonlinear equation:
  	\begin{align}\nonumber
  	G(\beta)=\frac{N}{\beta}+ \sum_{j=1}^{N} x_{j} -\frac{NB(x,\beta)}{A(x,\beta)}=0,
  	\end{align}
  	if
  	\begin{itemize}
  		\item[(i)] $\sum_{j=1}^{N}(1+r_{j})t(e^{\beta x_{j}}-1)>2\sum_{j=1}^{N}(1+r_{j})x_{j}e^{\beta x_{j}}$ and 
  		\item[(ii)] $2\sum_{j=1}^{N}x_{j}\big(\sum_{j=1}^{N}(1+r_{j})x_{j}+r^*t\big)>N\big(\sum_{j=1}^{N}(1+r_{j})x^2_{j}+r^*t^2\big),$
  		\end{itemize}
  	where $A(x,\beta)= \sum_{j=1}^{N} (r_{j}+1)(e^{\beta x_{j}}-1) +t^{*}$ and $B(x,\beta)=\sum_{j=1}^{N}(1+r_{j})x_{j}e^{\beta x_{j}}+r^{*}te^{\beta t}$.
  \end{thm}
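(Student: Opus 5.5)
We reduce the problem to the profile equation in $\beta$ and then run an intermediate value argument combined with monotonicity. For fixed $\beta>0$, the log-likelihood \eqref{3.2} is strictly concave in $(\alpha_1,\alpha_2)$: it is a sum of $N_i\log\alpha_i$ and a linear term, with $N_i>0$. Hence the unique maximiser for given $\beta$ is \eqref{3.6}. Substituting it gives the profile log-likelihood
\[
p(\beta)=\text{const}+N\log\beta-N\log A(x,\beta)+\beta\sum_{j=1}^N x_j .
\]
Using $A'(x,\beta)=B(x,\beta)$, which holds because $\frac{d}{d\beta}t^*=r^*te^{\beta t}$, we get $p'(\beta)=G(\beta)$. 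It therefore suffices to show that $G$ has exactly one zero on $(0,\infty)$ and that this zero maximises $p$. The MLEs of $\alpha_i$ then follow from \eqref{3.8}.

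\textbf{Existence.} We apply the Lemma with $s_j=1+r_j$, writing $G(\beta)=N\,T_1(\beta)+\sum_j x_j$. Part (i) of the Lemma gives
\[
\lim_{\beta\to0}G(\beta)=\sum_j x_j-\frac{N\big(\sum_j(1+r_j)x_j^2+r^*t^2\big)}{2\big(\sum_j(1+r_j)x_j+r^*t\big)},
\]
and this is strictly positive exactly when condition (ii) holds. As $\beta\to\infty$, $N/\beta\to0$ and part (ii) of the Lemma gives $B/A\to x_N$ (Case I) or $t$ (Cases II, III). Hence $\lim_{\beta\to\infty}G(\beta)=\sum_j x_j-Nx_N$ or $\sum_j x_j-Nt$. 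This is $\le 0$ because $x_j\le x_N\le t$, and it is strictly negative unless all observations coincide with the upper limit. In that degenerate case I would instead examine the sign of $G$ for large $\beta$ directly, expecting $N/\beta-N(B/A-\text{limit})$ to be negative. By continuity of $G$ on $(0,\infty)$, the intermediate value theorem gives a root $\widehat\beta$.

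\textbf{Uniqueness and maximality.} We show that $G$ is strictly decreasing, i.e.
\[
G'(\beta)=-\frac{N}{\beta^2}-N\,\frac{A\,B'-B^2}{A^2}<0 .
\]
One route is the Cauchy--Schwarz inequality, which gives $AB'\ge B^2$ up to the $-1$ terms inside $A$. These correction terms are exactly where I expect condition (i) to be needed. Written as $tA_0>2B_0$ (subscript $0$ for the sums without the $r^*$ part), it is a bound controlling $B/A$. I would use it to show that the negative term $-N/\beta^2$ together with $B'A-B^2$ has the right sign, for instance by bounding $B'\ge$ something involving $tB$ and then comparing $A^2/\beta^2$ with $AB'-B^2$.

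This derivative sign is the main obstacle: verifying that (i) is precisely what makes $G'<0$ will require careful algebra with the terms $e^{\beta x_j}-1$. If a direct proof fails, a fallback is to show that $p$ is strictly concave via $p''=G'$, using the same inequalities. Once $G$ is strictly decreasing, it has a single root, $p$ increases before and decreases after $\widehat\beta$, and so $(\widehat\alpha_1,\widehat\alpha_2,\widehat\beta)$ is the unique global maximiser.
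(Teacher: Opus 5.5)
Your reduction is sound. Strict concavity in $(\alpha_1,\alpha_2)$ for fixed $\beta$, the profile $p(\beta)$ with $p'=G$ (via $A'=B$), and the existence step from the Lemma, condition (ii) and the intermediate value theorem all match the paper's $G(0^+)>0$, $G(\infty)<0$ argument. The profile framing also gives global maximality, which the paper leaves implicit.

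The gap is the step that carries uniqueness: you never prove $G'(\beta)<0$, and the route you sketch would not get there. Cauchy--Schwarz on the discrete sums does not give $AB'\ge B^2$ up to small corrections. For $N=1$, $r^*=0$ one has $AB'-B^2=-(1+r_1)^2x_1^2e^{\beta x_1}<0$. So what must be shown is $A^2/\beta^2+AB'-B^2>0$, and the contribution of $-N/\beta^2$ is essential there. Your fallback is the same statement, since $p''=G'$. Nor can (i) be what makes this work. Because $e^u-1<ue^u$, the left side of (i) is smaller than $\beta t\sum_j(1+r_j)x_je^{\beta x_j}$, so (i) fails for every $\beta\le 2/t$. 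For comparison, the paper writes the numerator of $G'$ (with $t_j=\beta x_j$, $p=\beta t$) as $G_1+\cdots+G_5$. It signs $G_1$ by Cauchy--Schwarz, $G_2,G_3$ by $(e^x-1)^2>x^2e^x$, and $G_4$ trivially. It invokes (i) only for the cross term $G_5$, which carries the factor $r^*$; by the remark above, that step does not cover small $\beta$ when $r^*>0$.

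The missing idea, which closes your gap without (i), is to apply Cauchy--Schwarz to an integral instead of to the sums. Since $(e^{\beta x}-1)/\beta=\int_0^x e^{\beta y}\,dy$,
\[
M(\beta):=\frac{A(x,\beta)}{\beta}=\int_0^\infty e^{\beta y}h(y)\,dy,\qquad h(y)=\sum_{j=1}^{N}(1+r_j)I(y\le x_j)+r^*I(y\le t).
\]
Hence $N/\beta-NB/A=-N(\log M)'$ and $G'(\beta)=-N(\log M)''(\beta)=-N(MM''-M'^2)/M^2$. A direct computation gives $\beta^2(MM''-M'^2)=A^2/\beta^2+AB'-B^2$, which is exactly the quantity above. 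Cauchy--Schwarz yields $M'^2\le MM''$, strictly because $h>0$ on an interval of positive length. So $G$ is strictly decreasing on $(0,\infty)$ for every sample. Moreover, $(\log M)'$ is the mean of $y$ under the density proportional to $e^{\beta y}h(y)$, which also recovers both limits in the Lemma.

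Finally, your expectation for large $\beta$ in the tied case ($x_1=\cdots=x_N$ with $r^*=0$ or $t=x_N$) is wrong. There $G(\beta)=\frac{N}{\beta}\bigl(1-\frac{\beta x_N}{e^{\beta x_N}-1}\bigr)>0$ for all $\beta$, so no root exists. This null event must be excluded rather than handled.
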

  \begin{proof}
  	It is sufficient to show that the MLE of $\beta$ exists and unique under the assumptions in $(i)$ and $(ii)$. The derivative of $G(\beta)$ is obtained as 
  	\begin{align}\nonumber
  	\begin{aligned}
  	G'(\beta)
  	&=-\frac{N}{\beta}\bigg[1+\frac{\sum_{j=1}^{N}(1+r_{j})t^2_{j}e^{t_{j}}+p^2r^*e^{p}}{\sum_{j=1}^{N}(1+r_{j})(e^{t_{j}}-1)+r^*(e^{p}-1)}-\frac{\big[\sum_{j=1}^{N}(1+r_{j})t_{j}e^{t_{j}}+pr^*e^{p}\big]^2}{\big[\sum_{j=1}^{N}(1+r_{j})(e^{t_{j}}-1)+r^*(e^{p}-1)\big]^2}\bigg]\\
  	&=-\frac{N}{\beta}\bigg[\frac{G_{1}(\beta)+G_{2}(\beta)+G_{3}(\beta)+G_{4}(\beta)+G_{5}(\beta)}{\big[\sum_{j=1}^{N}(1+r_{j})(e^{t_{j}}-1)+r^*(e^{p}-1)\big]^2}\bigg],
  	\end{aligned}
  	\end{align}
  	where $t_{j}=\beta x_{j}$, $p=\beta t$ and
  	\begin{eqnarray*}
  	G_{1}(\beta)&=&\sum_{j=1}^{N}(1+r_{j})(e^{t_{j}}-1)\sum_{j=1}^{N}(1+r_{j})\frac{t^2_{j}e^{2t_{j}}}{(e^{t_{j}}-1)}-\bigg(\sum_{j=1}^{N}(1+r_{j})t_{j}e^{t_{j}}\bigg)^2,\\
  	G_{2}(\beta)&=&\sum_{j=1}^{N}(1+r_{j})(e^{t_{j}}-1)\bigg[\sum_{j=1}^{N}(1+r_{j})\frac{1}{(e^{t_{j}}-1)}\{(e^{t_{j}}-1)^2-t^2_{j}e^{t_{j}}\}\bigg],\\
  	G_{3}(\beta)&=&r^*\{(e^p-1)^2-p^2e^p\},\\
  	G_{4}(\beta)&=&r^*(e^p-1)\bigg[\sum_{j=1}^{N}(1+r_{j})t^2_{j}e^{t_{j}}+2\sum_{j=1}^{N}(1+r_{j})(e^{t_{j}}-1)\bigg]\\
  	G_{5}(\beta)&=&{\beta}^2r^*e^p\bigg[\sum_{j=1}^{N}(1+r_{j})t(e^{\beta x_{j}}-1)-2\sum_{j=1}^{N}(1+r_{j})x_{j}e^{\beta x_{j}}\bigg].
  	\end{eqnarray*}
  	Here $'$ stands for the derivative. Denote $a_{j}=\sqrt{(1+r_{j})(e^{t_{j}}-1)}$ and $b_{j}=\sqrt{(1+r_{j})\frac{t^2_{j}e^{2t_{j}}}{(e^{t_{j}}-1)}}$, for $1\leq j\leq N$. Then, by using Cauchy-Schwarz inequality for real numbers $a_{j}$ and $b_{j}$, it can be established that $G_{1}(\beta)\geq0$. Since $(e^x-1)^2>x^2e^x$ for all $x>0$, we have $G_{2}(\beta)>0$ and $G_{3}(\beta)>0$.  Clearly, $G_{4}(\beta)>0$. Further, $G_{5}(\beta)>0$ if and only if condition $(i)$ holds. Thus, $G'(\beta)<0$. This implies that $G(\beta)$ is decreasing in $\beta$ for all values of $x_{j}>0$ and $r_{j}>0$, $1\leq j\leq N$. Now, $G(\beta)$ has a unique root if and only if $G(0)>0$ and $G(\infty)<0$.\\
  	Using Lemma $3.2$ with $s_{j}=1+r_{j}$, we have
  	\begin{align}\nonumber
  	\begin{aligned}
  	G(0)&=\sum_{j=1}^{N}x_{j}+N\lim_{\beta\rightarrow 0}\biggl\{\frac{1}{\beta}-\frac{\sum_{j=1}^{N} (1+r_{j})x_{j}e^{\beta x_{j}}+tr^{*}e^{\beta t}}{\sum_{j=1}^{N}(1+r_{j})(e^{\beta x_{j}}-1)+t^*}\biggr\}\\
  	&=\sum_{j=1}^{N}x_{j}-N\frac{\sum_{j=1}^{N}(1+r_{j})x^{2}_{j}+r^*t^{2}}{2\big(\sum_{j=1}^{N}(1+r_{j})x_{j}+r^*t\big)}\\
  	&=\frac{2\sum_{j=1}^{N}x_{j}\big(\sum_{j=1}^{N}(1+r_{j})x_{j}+r^*t\big)-N\big(\sum_{j=1}^{N}(1+r_{j})x^2_{j}+r^*t^2\big)}{2\big(\sum_{j=1}^{N}(1+r_{j})x_{j}+r^*t\big)},
  	\end{aligned}
  	\end{align}
  	which is strictly positive under the assumption $(ii)$. Further, from the second part of Lemma $3.2$, we obtain
  	\begin{align}\nonumber
  	\begin{aligned}
  	G(\infty)&=\sum_{j=1}^{N}x_{j}-N\lim_{\beta\rightarrow \infty} \frac{\sum_{j=1}^{N}(1+r_{j})x_{j}e^{\beta x_{j}}+r^*te^{\beta t}}{\sum_{j=1}^{N}(1+r_{j})(e^{\beta x_{j}}-1)+t^*}\\
  	&=\left\{
  	\begin{array}{ll}
  		\sum_{j=1}^{N}(x_{j}-x_{N}),  ~~\mbox{for case I}\\
  		\sum_{j=1}^{N}(x_{j}-t),~~~~~\mbox{for case II and case III}
  	\end{array}
  	\right.\\
   	&<0.
  	\end{aligned}
  	\end{align}
  	Hence, the proof is completed.
  \end{proof}

  \subsection{Interval estimation}
  In this subsection, we obtain asymptotic confidence intervals of the model parameters. Some related results in this direction may be found in Dutta and Kayal \cite{dutta2023inference}, Elshahhat and Nassar \cite{elshahhat2024inference} and Alqasem and Elshahhat \cite{alqasem2025reliability}. Similar to these works, here we utilize asymptotic normality property of the MLEs of ${\alpha}_{1}$, ${\alpha}_{2}$ and $\beta$, to construct $100(1-\gamma) \%$ ACIs. Under some regularity conditions, the asymptotic distribution of the MLEs $(\widehat{{\alpha}_{1}},\widehat{{\alpha}_{2}},\widehat{\beta})^{T}$ can be obtained as
  \begin{align}
  	\nonumber (\widehat{{\alpha}_{1}},\widehat{{\alpha}_{2}}.\widehat{\beta})^{T}-({\alpha}_{1},{\alpha}_{2},{\beta})^{T} \sim N\big(0,I^{-1}_{obs}(\widehat{{\alpha}_{1}},\widehat{{\alpha}_{2}},\widehat{\beta})\big),
  \end{align}
  where
  \begin{align}\label{3.10}
  	I^{-1}_{obs}(\widehat{{\alpha}_{1}},\widehat{{\alpha}_{2}},\widehat{\beta})&= {\begin{bmatrix}
  			Q_{200} & Q_{110} & Q_{101}\\
  			Q_{110} & Q_{020} & Q_{011}\\
  			Q_{101} & Q_{011} & Q_{002}\\
  	\end{bmatrix}}^{-1}_{({\alpha}_{1},{\alpha}_{2},\beta)=(\widehat{{\alpha}_{1}},\widehat{{\alpha}_{2}},\widehat{\beta})}\nonumber\\&= {\begin{bmatrix}
  			Var(\widehat{{\alpha}_{1}}) & Cov(\widehat{{\alpha}_{1}},\widehat{{\alpha}_{2}}) & Cov(\widehat{{\alpha}_{1}},\widehat{\beta})\\
  			Cov(\widehat{{\alpha}_{1}},\widehat{{\alpha}_{2}}) & Var(\widehat{{\alpha}_{2}}) & Cov(\widehat{{\alpha}_{2}},\widehat{\beta})\\
  			Cov(\widehat{{\alpha}_{1}},\widehat{\beta}) & Cov(\widehat{{\alpha}_{2}},\widehat{\beta}) & Var(\widehat{\beta})\\
  	\end{bmatrix}},
  \end{align}
  is the inverse of observed Fisher information matrix for ${\alpha}_{1}$, ${\alpha}_{2}$ and $\beta$ and $Q_{ijk}=-\frac{\partial^{2} l}{\partial{\alpha}_{1}^{i} \partial{\alpha}_{2}^{j}\partial\beta^{k}}$, $i,j,k=0,1,2$; $T$ denotes tranpose of a vector.
  
  By using \eqref{3.3}, \eqref{3.4} and \eqref{3.5}, we can derive the following equations to determine the observed Fisher information matrix\\
  \begin{align}
  	\nonumber &Q_{200}=-\frac{\partial^{2} l}{\partial \alpha_{1}^{2}}=\frac{N_{1}}{\alpha_{1}^{2}}, Q_{110}=-\frac{\partial^{2} l}{\partial \alpha_{1} \partial \alpha_{2}}=0, Q_{020}=-\frac{\partial^{2} l}{\partial \alpha_{2}^{2}}=\frac{N_{2}}{\alpha_{2}^2},\\
  	\nonumber &Q_{101}=Q_{011}=-\frac{\partial^{2} l}{\partial \alpha_{1} \partial \beta}=-\frac{\partial^{2} l}{\partial \alpha_{2} \partial \beta}=\frac{A(x,\beta)}{\beta^{2}}-\frac{B(x,\beta)}{\beta}
  \end{align}
  and
  \begin{align}
  	\nonumber Q_{002}=-\frac{\partial^{2} l}{\partial \beta^{2}}=\frac{2\sum_{i=1}^{2}\alpha_{i} A(x,\beta)}{\beta^{3}}-\frac{2\sum_{i=1}^{2} B(x,\beta)}{\beta^{2}}+\frac{C(x,\beta)}{\beta},
  \end{align}
  where $C(x,\beta)=\sum_{j=1}^{N} (1+r_{j})x^{2}_{j}e^{\beta x_{j}}+r^{*}t^{2}e^{\beta t}$.
  Thus, the $100(1-\gamma) \%$ ACIs for ${\alpha}_{1}$, ${\alpha}_{2}$ and $\beta$ are respectively given by
  \begin{align}
  	\nonumber \bigg(\widehat{{\alpha}_{1}}~ \underline{+}~z_{\frac{\gamma}{2}}\sqrt{Var(\widehat{{\alpha}_{1}})}\bigg) ~~\mbox{,}~~ \bigg(\widehat{{\alpha}_{2}}~ \underline{+}~z_{\frac{\gamma}{2}}\sqrt{Var(\widehat{{\alpha}_{2}})}\bigg)~~\mbox{and}~~\bigg(\widehat{\beta}~ \underline{+}~z_{\frac{\gamma}{2}}\sqrt{Var(\widehat{\beta})}\bigg),
  \end{align}
  where $z_{\frac{\gamma}{2}}$ is the upper $\frac{\gamma}{2}$th percentile point of  standard normal distribution.\\
  \section{Bayesian estimation}
  In this section, we derive Bayes estimates for the unknown parameters $\alpha_{1}$, $\alpha_{2}$ and $\beta$ using IAT-II progressively censored competing risks data. Three different types of loss functions are employed in this process. The squared error loss function (SELF) is a symmetric or balance loss function, which provides equal importance to the over as well as under estimation. This loss function has been considered by various authors for the purpose of Bayesian analysis. For instance see Maiti and Kayal \cite{maiti2021estimation} and Saha and Yadav \cite{saha2021estimation}. The LINEX loss function (LLF), introduced by Varian \cite{varian1975bayesian} is an asymmetric loss function. It increases exponentially on one side of the origin and linearly (almost) on the other side of the origin. Due to this property, it becomes more appropriate tool than the SELF when there are consequences of differing overestimation and underestimation. Another important asymmetric loss function is the generalised entropy loss function (GELF) introduced by Calabria and Pulcini \cite{calabria1996point}. Like the SELF, the LLF and GELF have been used by several researchers. The interested readers are refered to Al Duais \cite{al2021bayesian}, Mohammed et. al \cite{mohammed2022bayesian}, Eliwa et. al \cite{eliwa2022general} and Sana and Faizan \cite{sana2021bayesian} for some works with respect to these loss functions. 
  
  Let $\widehat{\alpha}$ be an estimator of the parameter $\alpha$.  Then, the SELF, LLF and GELF are respectively defined as
  \begin{equation}
  	L_{SE}(\alpha,\widehat{\alpha}) =(\widehat{\alpha}-\alpha)^2, \label{4.1} 
  \end{equation}
  \begin{equation}
  	L_{LI}(\alpha,\widehat{\alpha})=e^{p(\widehat{\alpha}-\alpha)}-p(\widehat{\alpha}-\alpha)-1,~ p\neq 0, \label{4.2}
  \end{equation}
  \begin{equation}
  	L_{GE}(\alpha,\widehat{\alpha})=\bigg(\frac{\widehat{\alpha}}{\alpha}\bigg)^q -q\log \bigg(\frac{\widehat{\alpha}}{\alpha}\bigg)-1,~ q \neq 0. \label{4.3}
  \end{equation}
  We note that $p$ and $q$ respectively in (\ref{4.2}) and (\ref{4.3}) are known as the loss parameters. Depending on the signs of the loss parameters, the appropriate loss functions are generally chosen. Recall that the direction of symmetry is reflected by the sign of the parameter $p$ and the degree of asymmetry is reflected by its magnitude. For the case of the GELF, the departure from symmetry is reflected by the loss parameter $q$.  Under the above loss functions in \eqref{4.1}, \eqref{4.2} and \eqref{4.3}, we respectively write the Bayes estimates of $\alpha$ in the following forms
  \begin{equation}
  	\widehat{\alpha}_{SE}= E_{\alpha}(\alpha |\underline{x}), \label{4.4}
  \end{equation}
  \begin{equation}
  	\widehat{\alpha}_{LI}= -p^{-1} \log[E_{\alpha}(e^{{-p\alpha}}|\underline{x})],~p\neq 0, \label{4.5} 
  \end{equation}
  \begin{equation}
  	\widehat{\alpha}_{GE}= [E_{\alpha}({\alpha}^{-q}|\underline{x})]^{-\frac{1}{q }},~  q \neq 0,    \label{4.6}
  \end{equation}
  where the observed data is $\underline{x}=(x_{1},\cdots,x_{N})$. An important and challenging problem is to choose priors for the unknown model parameters for Bayesian estimation. It can be remarked that the prior information about the unknown model parameter of interest is reflected by the chosen prior distributions. For the theoretical purpose, here we consider informative independent gamma priors as $\alpha_{1} \sim G(a_{1},b_{1})$, $\alpha_{2} \sim G(a_{2},b_{2})$ and $\beta \sim G(a_{3},b_{3})$, where $a_{i},b_{i}>0$, $i=1,2,3$ are the hyper parameters, reflecting the prior knowledge about the model parameters. The $G(a_{i},b_{i})$ notation represents gamma distribution with scale parameter $1/b_{i}$ and shape parameter $a_{i}$.The gamma distribution's adaptability makes it ideal for this use. Combining its scale and shape characteristics with the model parameters allows for flexibility in capturing different previous assumptions. It is also flexible due to the presence of two parameters, which allow it to model various degrees of skewness and peakedness by changing the values of the shape and scale parameters. Furthermore, independent gamma priors are special examples of non-informative priors on the scale and shape parameters. Because of this, the gamma distribution is a reliable option even in situations where little to no prior information is available. Note that many authors have considered independent gamma priors for the purpose of Bayesian estimation. See, for example Kundu and Gupta \cite{kundu2008generalized}, Saha and Dutta \cite{saha2025bayesian} and Pradhan and Kundu \cite{pradhan2011bayes}. The joint prior PDF of ${\alpha}_{1}$, ${\alpha}_{2}$ and $\beta$ is written as
  \begin{equation}
  	\pi(\alpha_{1},\alpha_{2},\beta) \propto \alpha_{1}^{a_{1}-1}\alpha_{2}^{a_{2}-1}\beta^{a_{3}-1}e^{-(b_{1}\alpha_{1}+b_{2}\alpha_{2}+b_{3}\beta)}, ~~\alpha_{1},\alpha_{2},\beta,a_{i},b_{i}>0, ~i=1,2,3. \label{4.7}
  \end{equation}
  Using likelihood function from \eqref{3.1} and joint prior density function from \eqref{4.7}, we obtain posterior probability density function as
  \begin{equation}
  	\pi({\alpha}_{1},{\alpha}_{2},\beta|\underline{x})=  K^{-1}\alpha_{1}^{a_{1}+N_{1}-1}\alpha_{2}^{a_{2}+N_{2}-1}\beta^{a_{3}-1} e^{-\beta(b_{3}-\sum_{j=1}^{N}x_{j})}e^{-(b_{1}\alpha_{1}+b_{2}\alpha_{2})} e^{-\frac{A(x,\beta)(\alpha_{1}+\alpha_{2})}{\beta}} \label{4.8}
  \end{equation}
  where
  \begin{align}
  	\nonumber K=\int_{0}^{\infty}\int_{0}^{\infty}\int_{0}^{\infty} \alpha_{1}^{a_{1}+N_{1}-1}\alpha_{2}^{a_{2}+N_{2}-1}\beta^{a_{3}-1}
  	e^{-\beta(b_{3}-\sum_{i=1}^{N}x_{i})}e^{-(b_{1}\alpha_{1}+b_{2}\alpha_{2})} e^{-\frac{A(x,\beta)(\alpha_{1}+\alpha_{2})}{\beta}}~d\alpha_{1} d\alpha_{2} d\beta
  \end{align}
  and $A(x,\beta)= \sum_{j=1}^{N} (r_{j}+1)(e^{\beta x_{j}}-1) +t^{*}$.
  
  Since, we can not derive the solutions for Bayesian estimators explicitly, a numerical method is applied to derive the Bayes estimates under SELF, LLF and GELF. Here, we try to incorporate Metropolis-Hastings(M-H algorithm) into the Gibbs method to produce the Bayesian estimators for $\alpha_{1}$, $\alpha_{2}$ and $\beta$. We write the conditional posterior distributions of $\alpha_{1}$, $\alpha_{2}$ and $\beta$ as
  \begin{equation}
  	\pi_{1}(\alpha_{1}|\alpha_{2},\beta,\underline{x})\propto \alpha_{1}^{a_{1}+N_{1}-1}\exp \bigg[-\alpha_{1}\bigg(b_{1}+\frac{A(x,\beta)}{\beta}\bigg)\bigg] \sim G\bigg(N_{1}+a_{1},\bigg(b_{1}+\frac{A(x,\beta)}{\beta}\bigg)\bigg), \label{4.9}
  \end{equation}
  \vspace{0.2cm}
  \begin{equation}
  	\pi_{2}(\alpha_{2}|\alpha_{1},\beta,\underline{x})\propto \alpha_{2}^{a_{2}+N_{2}-1}\exp \bigg[-\alpha_{2}\bigg(b_{2}+\frac{A(x,\beta)}{\beta}\bigg)\bigg] \sim G\bigg(N_{2}+a_{2},\bigg(b_{2}+\frac{A(x,\beta)}{\beta}\bigg)\bigg) \label{4.10}
  \end{equation}
  and
  \begin{equation}
  	\pi_{3}(\beta|\alpha_{1},\alpha_{2},\underline{x})\propto \beta^{a_{3}-1}e^{-\beta(b_{3}-\sum_{j=1}^{N}x_{j})}e^{-\frac{A(x,\beta)(\alpha_{1}+\alpha_{2})}{\beta}}. \label{4.11}
  \end{equation}
  We notice from \eqref{4.9} and \eqref{4.10} that the samples for $\alpha_{1}$ and $\alpha_{2}$ can be easily generated by using gamma distributions. However, $\pi_{3}(\beta|\alpha_{1},\alpha_{2},\underline{x})$ in \eqref{4.11} can not be converted to well-known distribution. Thus, it is not possible to generate sample for $\beta$ directly by standard methods. One way to look at the density plot of the conditional posterior density of $\beta$, which is depicted in Figure \ref{fig:fig3}. It looks like to the Gaussian distribution. As a result, the normal proposal distribution can be used in M-H algorithm for the purpose of Bayesian estimation.
  \begin{figure}[ht!]
   \begin{center}
   \includegraphics[height=3.9in,width=4.9in]{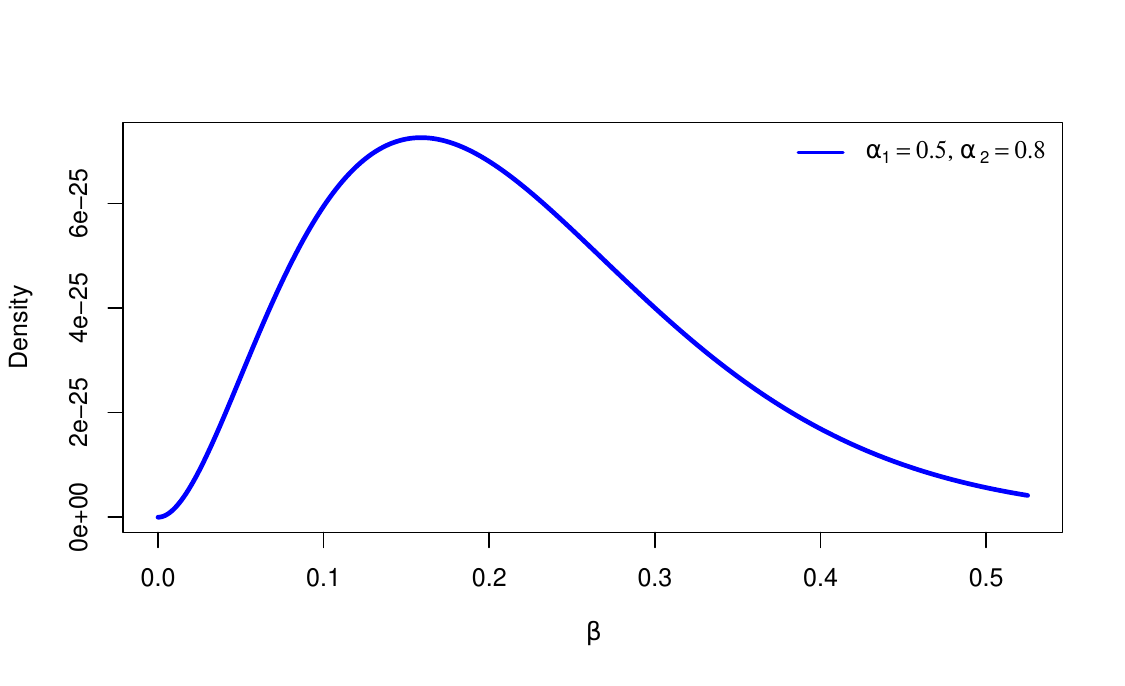}
   \caption{Plot for the conditional posterior probability density function of $\beta$ for $n=60$, $m=30$ and CS-II}
   \label{fig:fig3}
   \end{center}
  \end{figure}
  
  We take an arbitrary function $\phi(\alpha_{1},\alpha_{2},\beta)$ of parameters $\alpha_{1}$, $\alpha_{2}$ and $\beta$. Then, from \eqref{4.4} to \eqref{4.6}, the Bayes estimates of $\phi(\alpha_{1},\alpha_{2},\beta)$ regarding SELF, LLF and GELF are given by 
  \begin{equation}
  	\widehat{\phi}_{SE}= \int_{0}^{\infty}\int_{0}^{\infty}\int_{0}^{\infty} \phi(\alpha_{1},\alpha_{2},\beta) \pi({\alpha}_{1},{\alpha}_{2},\beta|\underline{x}) d\alpha_{1}d\alpha_{2}d\beta, \label{4.12}
  \end{equation}
  \begin{equation}
  	\widehat{\phi}_{LI}= -\bigg(\frac{1}{p}\bigg)\log\bigg[\int_{0}^{\infty}\int_{0}^{\infty}\int_{0}^{\infty} e^{-p\phi(\alpha_{1},\alpha_{2},\beta)} \pi({\alpha}_{1},{\alpha}_{2},\beta|\underline{x}) d\alpha_{1}d\alpha_{2}d\beta\bigg] \label{4.13}
  \end{equation}
  and
  \begin{equation}
  	\widehat{\phi}_{GE}= \bigg[\int_{0}^{\infty}\int_{0}^{\infty}\int_{0}^{\infty} (\phi(\alpha_{1},\alpha_{2},\beta))^{-q} \pi({\alpha}_{1},{\alpha}_{2},\beta|\underline{x}) d\alpha_{1}d\alpha_{2}d\beta\bigg]^{-\frac{1}{q}} \label{4.14}
  \end{equation}
  respectively.
  
  The MCMC algorithm for Bayesian estimates and Bayesian credible intervals will execute the following steps:\\
  ------------------------------------------------------------------------------------------------------------------------
  \textbf{Algorithm II}\\
  ------------------------------------------------------------------------------------------------------------------------
  \textbf{Step-1:} Select initial guesses of $(\alpha_{1},\alpha_{2},\beta)$, indicated by $(\alpha_{1}^{(0)},\alpha_{2}^{(0)},\beta^{(0)})$ and set $k=1$.\\
  \textbf{Step-2:} Generate $\beta^{(k)}$ from $\pi_{3}(\beta^{(k-1)}|\alpha_{1}^{(k-1)},\alpha_{2}^{(k-1)},\underline{x})$ using M-H method using normal proposal distribution $N(\beta^{(k-1)},Var(\beta))$, where $\beta^{(k-1)}$ is the current value of $\beta$ and $Var(\beta)$ is the variance of $\beta$.\\
  \textbf{Step-3:} Generate $\alpha_{i}^{(k)}$ from $G\bigg(N_{i}+a_{i},\bigg(b_{i}+\frac{A(x,\beta)}{\beta}\bigg)\bigg)$ for $i=1,2$.\\
  \textbf{Step-4:} Set $k=k+1$.\\
  \textbf{Step-5:} Repeat Steps $(2-3)$ $D$ times to get the necessary number of samples $$(\alpha_{1}^{(1)},\alpha_{2}^{(1)},\beta^{(1)}),(\alpha_{1}^{(2)},\alpha_{2}^{(2)},\beta^{(2)}),\cdots,(\alpha_{1}^{(D)},\alpha_{2}^{(D)},\beta^{(N)}).$$ 
  The Bayes estimates are generated using the remaining $D-M$ burn-in samples once the initial $M$ burn-in samples have been discarded.\\
  \textbf{Step-6:} The Bayes estimates of any function $\phi(\alpha_{1},\alpha_{2},\beta)$ for the SELF, LLF and GELF are calculated using 
  \begin{align}
  	\nonumber &\widehat{\phi}_{SE}=\frac{1}{D-M}\sum_{k=M+1}^{D} \phi(\alpha_{1}^{(k)},\alpha_{2}^{(k)},\beta^{(k)}),\\
  	\nonumber &\widehat{\phi}_{LI}=\frac{-1}{c}\log\bigg[\frac{1}{D-M}\sum_{k=M+1}^{D}e^{-c\phi(\alpha_{1}^{(k)},\alpha_{2}^{(k)},\beta^{(k)})}\bigg]
  \end{align}
  and
  \begin{align}\nonumber
  	\widehat{\phi}_{GE}=\bigg[\frac{1}{D-M}\sum_{k=M+1}^{D} (\phi(\alpha_{1}^{(k)},\alpha_{2}^{(k)},\beta^{(k)}))^{-c}\bigg]^{-\frac{1}{c}}
  \end{align}
  where $\phi(\alpha_{1},\alpha_{2},\beta)$ refers to the parameters $\alpha_{1},\alpha_{2},\beta$.\\
  \textbf{Step-7:} Order $\phi^{M+1},\phi^{M+2},\cdots,\phi^{D}$ as $\phi_{(1)},\phi_{(2)},\cdots,\phi_{(D-M)}$. Then, the $100(1-\gamma)\%$ Bayesian credible interval of $\widehat{\phi}$ is given by $(\phi_{[(D-M)\frac{\gamma}{2}]},\phi_{[(D-M)(1-\frac{\gamma}{2})]})$.\\
  ------------------------------------------------------------------------------------------------------------------------
  
  \section{Hypothesis testing}
  In previous sections, inference for Gompertz competing risks data based on IAT-II PCS has been obtained with common scale parameter $\beta_{1}=\beta_{2}=\beta$ and different shape parameters $\alpha_{1}$ and $\alpha_{2}$. It is important to examine whether the scale parameters $\beta_{1}$ and $\beta_{2}$ are equal in practice, and likewise to test if the shape parameters $\alpha_{1}$ and $\alpha_{2}$ are equal in practice. To investigate this phenomenon, following hypotheses have been proposed as
  \begin{align}
  	\nonumber&(1)~~H_{0}: \alpha_{1}=\alpha_{2}=\alpha~~\mbox{vs}~~H_{1}: \alpha_{1}\neq\alpha_{2}\\
  	&(2)~~H_{0}: \beta_{1}=\beta_{2}=\beta~~\mbox{vs}~~H_{1}: \beta_{1}\neq\beta_{2}\label{5.1}
  \end{align}
  For test $(1)$, the likelihood ratio statistics (LRS) is given  by
  \begin{equation}
  	\Lambda_{\alpha}=\frac{max~L(\alpha,\beta_{1},\beta_{2})}{max~L(\alpha_{1},\alpha_{2},\beta_{1},\beta_{2})},\label{5.2}
  \end{equation}
  where $L(\cdot)$ denotes the likelihood function. For large $n$, the LRS can be expressed as
  \begin{equation}
  	LRS_{\alpha}=-2\log\Lambda_{\alpha}=-2(l(\hat{\alpha},\hat{\beta_{1}},\hat{\beta_{2}})-l(\hat{\alpha_{1}},\hat{\alpha_{2}},\hat{\beta_{1}},\hat{\beta_{2}}))\sim \chi^{2}_{(1)},\label{5.3}
  \end{equation} 
  where $l(\cdot)$ denotes the log-likelihood function. Similarly, for test $(2)$,
  \begin{equation}
  	\Lambda_{\beta}=\frac{max~L(\alpha_{1},\alpha_{2},\beta)}{max~L(\alpha_{1},\alpha_{2},\beta_{1},\beta_{2})}\label{5.4}
  \end{equation}
  and
  \begin{equation}
  	LRS_{\beta}=-2\log\Lambda_{\beta}=-2(l(\hat{\alpha_{1}},\hat{\alpha_{2}},\hat{\beta})-l(\hat{\alpha_{1}},\hat{\alpha_{2}},\hat{\beta_{1}},\hat{\beta_{2}}))\sim\chi^{2}_{(1)}.\label{5.5}
  \end{equation}
  Using the asymptotic distribution of LRS, the likelihood ratio test can be constructed and the null hypothesis $H_{0}$ can be rejected if $LRS>c$, where $P(\chi^{2}_{(1)}>c)=$size of the test at the $5\%$ significance level.
  \section{Simulation study}
  Here, we have conducted a Monte Carlo simulation to analyze the performance of the proposed estimates based on 5,000 generated IAT-II PCS samples using $R$ programming. The performance of the  proposed point estimates are evaluated using  mean squared error (MSE). The performance of the proposed interval estimates are evaluated based on coverage probability (CP) and average length (AL).
  \begin{itemize}
  	\item \textbf{MSE:} Let $\widehat{{\alpha}_{i}}$ be an estimate of the parameter, say $\alpha_{i}$. Then the MSE is given by $\frac{1}{M}\sum_{i=1}^{M}(\widehat{{\alpha}_{i}}-\alpha_{i})^2$, where $M$ indicates the number of repetitions in the simulation. A more accurate estimate is detected by smaller MSE.
  	\item \textbf{CP:} It represents the likelihood that the true parameter value lies within the interval estimates. When the CP is close to the nominal value, that is, $100(1-\gamma)\%$, it indicates a more accurate result in terms of the CP.
  	\item \textbf{AL:} A shorter interval length suggests that the prediction model demonstrates greater accuracy when aligned with the experimental data.
  \end{itemize}
  To generate competing risks samples from the Gompertz distribution under IAT-II PCS, the following algorithm is introduced.\\
  ------------------------------------------------------------------------------------------------------------------------
  \textbf{Algorithm III}\\
  ------------------------------------------------------------------------------------------------------------------------
  \textbf{Step 1:} Generate type-II progressive censored Gompertz data of effective sample size $m$ with parameter $\alpha_{1}+\alpha_{2}$. \\
  \textbf{Step 2:} Get IAT-II PCS sample of size $N$ by using two time thresholds $t_1$ and $t_2$ on $X_{m:m:n}$.\\
  \textbf{Step 3:} Allot failure risk $\delta_{j}=0,1,(j=1,\cdots,m)$, with probability of failure risks as $P(\delta_{j}=1)=\frac{\alpha_{1}}{\alpha_{1}+\alpha_{2}}$ and $P(\delta_{j}=0)=\frac{\alpha_{2}}{\alpha_{1}+\alpha_{2}}$ to the generated censored data.\\
  ------------------------------------------------------------------------------------------------------------------------
  For simulation, different sample sizes $(n)$, different effective sample sizes $(m)$ and different time thresholds $(t_{1},t_{2}),$ have been considered. Three different censoring schemes (CS) are adopted, given as follows\\[1 mm]
  \textbf{Scheme-I}~~~$r_{1}=r_{2}=\cdots=r_{m-1}=0,r_{m}=n-m$.\\
  \textbf{Scheme-II}~~$r_{1}=r_{2}=\cdots=r_{m}=(n-m)/m$.\\
  \textbf{Scheme-III}~$r_{1}=r_{2}=\cdots=r_{m-1}=1,r_{m}=n-2m+1$.\\
  
  The average values and MSEs ofthe MLEs and Bayes estimates are computed when the true values of $(\alpha_{1},\alpha_{2},\beta)$ are $(0.5,0.8,0.75)$ and $(0.7,1.15,0.9)$. For simulation, Bayes estimates are computed based on both non-informative priors (NIP) and informative priors (IP). For NIP, the hyper parameters are set to (nearly) zero, while for IP, they are calculated using the following formula (see Dey et al. \cite{dey2016estimation}).\\
  \begin{align}
  	\nonumber &a=\frac{\widehat{\alpha_{1}}^2}{\frac{1}{N-1}\sum_{j=1}^{N}(\widehat{\alpha}_{1j}-\widehat{\alpha_{1}})^2},~~b=\frac{\widehat{\alpha_{1}}}{\frac{1}{N-1}\sum_{j=1}^{N}(\widehat{\alpha}_{1j}-\widehat{\alpha_{1}})^2},\\
  	\nonumber &c=\frac{\widehat{\alpha_{2}}^2}{\frac{1}{N-1}\sum_{j=1}^{N}(\widehat{\alpha}_{2j}-\widehat{\alpha_{2}})^2},~~d=\frac{\widehat{\alpha_{2}}}{\frac{1}{N-1}\sum_{j=1}^{N}(\widehat{\alpha}_{2j}-\widehat{\alpha_{2}})^2},\\
  	\nonumber &e=\frac{\widehat{\beta}^2}{\frac{1}{N-1}\sum_{j=1}^{N}(\widehat{\beta}_{j}-\widehat{\beta})^2},~~f=\frac{\widehat{\beta}}{\frac{1}{N-1}\sum_{j=1}^{N}(\widehat{\beta}_{j}-\widehat{\beta})^2},
  \end{align}
  where $\alpha_{1j}$, $\alpha_{2j}$ and $\beta_{j}$, $j=1,\cdots,N$ are the MLEs of $\alpha_{1}$, $\alpha_{2}$ and $\beta$, respectively for $N$ number of IAT-II PCSs. We compute Bayes estimates using SELF, LLF (when $p=-0.05,0.25$) and GELF (when $q=-0.05,0.25$). Next, the ACIs and HPD credible intervals are computed based on the AL and CP. The following observations are made from Tables \ref{T1} to \ref{T4}.
  \begin{itemize}
   \item For a particular value of $n$, when effective sample size $m$ is increased, the simulated values of MSEs decrease.
   \item The Bayes estimates based on IP perform better than that based on NIP.
   \item Based on the MSEs and average values of parameters, the Bayes estimates under LLF provide better results than other estimates of $\alpha_{1}$, $\alpha_{2}$ and $\beta$.
  \end{itemize}
  For interval estimates, the following observations are made from Tables \ref{T5} to \ref{T8}.
  \begin{itemize}
   \item When sample size $(n)$ and effective sample sample size $(m)$ increase, ALs of intervals decrease.
   \item The HPD credible intervals perform better than the ACIs in terms of the AL and CP.
   \item When the values of time thresholds $t_{1}$ and $t_{2}$  increase, the length of intervals decrease.
  \end{itemize}

  The performance of the classical and Bayesian estimates are quite reliable. When $p$ approaches zero, the Bayes estimates under the LLF are nearly identical to those under the SELF. The Bayes estimates based on IP with respect to LLF is superior choice in this study. Besides, the HPD credible intervals are recommended to prefer if one considers AL and CP.
   {
   \scriptsize
}
   \section{Optimality criterion}
   In survival analysis, selecting an optimal censoring plan from chosen schemes is often essential to obtain adequate information about the unknown model parameters. In this study, we have employed three widely used criteria based on the variance-covariance matrix (VCM) of the observed Fisher information matrix associated with the MLEs of the unknown parameters. In statistical literature, the $A$ and $D$-optimality criteria are the most commonly utilized. These two criteria minimize the trace and determinant of the VCM, respectively. Additionally, $F$-optimality aims to minimize the trace of the observed Fisher information matrix for the MLEs. Based on these criteria, the corresponding optimal censoring scheme is presented in Table \ref{T9}.
   \begin{table}[htbp!]
   	\begin{center}
   	\caption{Different optimality criterion.}
   	\label{T9}
   	\tabcolsep 7pt
   	 \begin{tabular}{*{3}c*{2}{r@{}l}}
   	 \toprule
   	 \multicolumn{1}{c}{Criterion}&&& & &\multicolumn{1}{c}{Goal}  \\
   	 \midrule
   	 A-optimality &&& & & minimum trace $(I^{-1}(\widehat{\Theta}))$&\\
   	 D-optimality &&& & &minimum det $(I^{-1}(\widehat{\Theta}))$~~~& \\
   	 F-optimality &&&&& maximum  trace $(I(\widehat{\Theta}))$~~~&\\				
   	 \bottomrule
   	 \end{tabular}
   	\end{center}
   \end{table}
   \section{Analyzing real data}
   The importance of the theoretical findings that were discussed in the previous parts will be described in this section using an example from a research survey on affects of food availability to certain bird population. This analysis of the real-world data set confirms the validity of the proposed point and interval estimates for the unknown parameters.

   In this application, we analyze the data provided by Briga et al. \cite{briga2017food}. According to the data, chicks of certain species called Zebra Finches were reared from birth upto $35$ or $120$ days and adult birds were housed for lifetime in different environmental conditions based on food availability by manipulating foraging costs. For clarity, we state four experimental groups as BB, BH, HB, HH, where first letter indicates benign (B) or harsh (H) developmental conditions, and the second letter indicates for benign (B) or harsh (H) foraging conditions in adulthood. For simplicity, here we consider only two conditions such as BB for cause $1$ and HB for cause $2$.
   
   Assuming independent Gompertz distributions for the latent causes of failure, the Kolmogorov-Smirnov (K-S) test is used to assess the goodness of fit of the proposed model to the two causes of failure, under the hypotheses $H_{0}$ (data follows the distribution) and $H_{1}$ (data does not follow the distribution). 
   \begin{table}[htbp!]
   	\begin{center}
    \caption{The K-S test statistic result for Briga $(2016)$ data.}
   	\label{T10}
   	\tabcolsep 10pt
   	 \begin{tabular}{*{11}c*{10}{r@{}l}}
   	 \toprule
   	 \multicolumn{1}{c}{Dist.} & \multicolumn{1}{c}{Data} & \multicolumn{1}{c}{$\hat{\theta}$} & \multicolumn{1}{c}{K-S statistic} & \multicolumn{1}{c}{$p$-value} & \multicolumn{1}{c}{AIC} \\
   	 \midrule
   	 Exponential& Cause $1$& $(\hat{\alpha}=0.3724)$& $0.1675$& $0.0014$& $514.85$\\[1mm]
   	 & Cause $2$& $(\hat{\alpha}=0.3966)$& $0.1416$& $0.0096$& $513.99$\\
   	 \midrule
   	 Weibull& Cause $1$& $(\hat{\alpha}=1.5090, \hat{\beta}=2.9629)$& $0.0859$& $0.2958$& $489.31$\\[1mm]
   	 & Cause $2$& $(\hat{\alpha}=1.3442, \hat{\beta}=2.7360)$& $0.0752$& $0.4394$& $500.41$\\
   	 \midrule
   	 Gompertz& Cause $1$& $(\hat{\alpha}=0.1864, \hat{\beta}=0.3045)$& $0.0559$& $0.8149$& $484.30$\\[1mm]
   	 & Cause $2$& $(\hat{\alpha}=0.2378, \hat{\beta}=0.2329)$& $0.0511$& $0.8779$& $495.81$\\
   	 \midrule
   	 \end{tabular}
   	\end{center}
   \end{table}

   Depending on the results from Table \ref{T10}, for the two causes of failure, we can say that at $5\%$ significance level; the computed value of the K-S statistics are lower than that of the other distributions, such as exponential, Weibull. Also, the p-values $(0.81)$, $(0.87)$ for cause $1$ and cause $2$ from the K-S test for Gompertz distribution are larger than the significance level $(0.05)$ which indicates that the Gompertz model generally fits well with the real data rather than exponential and Weibull model. Additionally, empirical cumulative distribution function (ECDF) plots, probability-probability (P-P) plots and quantile-quantile (Q-Q) plots are provided in Figure \ref{fig:fig5} which support that Gompertz distribution is suitable to fit this data.

   Applying various censoring schemes, we have generated two IAT-II progressively censored competing risks failure data sets with $N=31$, $N=33$ for $t_{0}=0.7$ and $t_{0}=0.85$, respectively. This competing risks data is provided in Table \ref{T11}. Furthermore, the equivalence between the shape and scale parameters are tested in Section $5$ for the real dataset with $5\%$ level of significance. We have obtained that the LRS value for $\alpha$ and $\beta$ with the corresponding $p$-values (within brackets) as $6.852(0.009)$ and $0.679(0.410)$ for Data I (From Table \ref{T11}) respectively and LRS value for $\alpha$ and $\beta$ with the corresponding $p$-values (within brackets) as $5.031 (0.024)$ and $0.884 (0.347)$ for Data II (From Table \ref{T11}) respectively. From this results, it can be decided that the null hypothesis for test $(1)$ is rejected but that for test $(2)$ is accepted. Therefore, it may be assumed that the scale parameters $(\beta)$ are equal and shape parameters $(\alpha)$ are different.
   
   Depending on the competing risks data provided in Table \ref{T11}, point and interval estimates have been calculated and shown in Table \ref{T12}. For interval estimates, a significance level $(\gamma)$ of 0.05 is considered and the interval lengths are presented in square brackets. To calculate the Bayes estimates, NIP has been implemented under SELF. Table \ref{T12} shows that the Bayes and point estimates based on MLEs are nearly identical. In comparison to ACIs, the HPD credible intervals perform better. Further, the profile log-likelihood of the unknown model parameters are plotted in Figure \ref{fig:fig4}, which suggests that the MLEs of $\alpha_{1}$, $\alpha_{2}$ and $\beta$ are unique. Additionally, we have computed three distinct optimality criteria for two data sets using censoring schemes I, II, and III. Depending upon these criteria, it is to be observed that CS-II and CS-III is optimal than CS-I for both of the data sets. 
   \newpage
   \begin{table}[ht!]
   	\begin{center}
   	\caption{ IAT-II PCS competing risks data set with two causes of failures generated from real data.}
   	\label{T11}
   	\tabcolsep 7pt
   	\small
   	\scalebox{0.9}{
   	 \begin{tabular}{*{1}c*{1}{r@{}l}}
   	 \toprule
   	 \textbf{Data I :}  $t_{0}=0.7$ and $N=31$\\
   	 \midrule
   	 (0.01,2),(0.04,1),(0.08,1),(0.10,2),(0.13,1),(0.16,2),(0.18,2),(0.19,1),(0.20,1),(0.23,2),(0.31,1),(0.32,1),\\(0.33,1),(0.34,1),(0.36,2),(0.38,1),(0.45,1),(0.46,1),(0.50,1),(0.52,2),(0.53,2),(0.54,1),(0.54,2),(0.58,1),\\(0.58,2),(0.60,1),(0.60,2),(0.66,1),(0.66,2),(0.67,1),(0.68,2).~~~~~~~~~~~~~~~~~~~~~~~~~~~~~~~~~~~~~~~~~~~~~~~~~~~~~~~\\
   	 \midrule
   	 \textbf{Data II :} $t_{0}=0.85$ and $N=33$\\
   	 \midrule
   	 (0.01,2),(0.04,1),(0.08,2),(0.10,1),(0.13,2),(0.16,1),(0.18,1),(0.19,1),(0.20,2),(0.23,1),(0.31,1),(0.31,2),\\(0.32,1),(0.33,1),(0.33,2),(0.34,2),(0.36,2),(0.38,2),(0.45,1),(0.46,1),(0.50,1),(0.52,1),(0.53,1),(0.54,1),\\(0.58,1),(0.58,2),(0.60,2),(0.66,2),(0.67,2),(0.68,2),(0.70,2),(0.78,2),(0.81,2).~~~~~~~~~~~~~~~~~~~~~~~~~~~~~~~~~~\\
   	 \bottomrule
   	 \end{tabular}}
   	\end{center}
   \end{table}
   {
   \scriptsize
   \begin{longtable}{*{9}c}
   	\caption{Point and interval estimates based on real data.}
   	\label{T12}\\
   	\midrule
   	\multicolumn{1}{c}{Data} & \multicolumn{1}{c}{CS} & \multicolumn{1}{c}{$\Theta$} & \multicolumn{1}{c}{$\alpha_{1}$} & \multicolumn{1}{c}{$\alpha_{2}$} & \multicolumn{1}{c}{$\beta$} & \multicolumn{1}{c}{A-optimality} & \multicolumn{1}{c}{D-optimality} & \multicolumn{1}{c}{F-optimality} \\
   	\midrule
   	\endfirsthead
   	\caption[]{(continued)} \\
   	\midrule
   	\multicolumn{1}{c}{Data} & \multicolumn{1}{c}{CS} & \multicolumn{1}{c}{$\Theta$} & \multicolumn{1}{c}{$\alpha_{1}$} & \multicolumn{1}{c}{$\alpha_{2}$} & \multicolumn{1}{c}{$\beta$} & \multicolumn{1}{c}{A-optimality} & \multicolumn{1}{c}{D-optimality} & \multicolumn{1}{c}{F-optimality} \\
   	\midrule
   	\endhead
   	\endfoot
   	\midrule
   	\endlastfoot
   	I & I & MLE & 0.05955& 0.05953& 2.02290& 0.328007& 8.499$\times 10^{-9}$& 1.242$\times 10^{4}$ \\
   	& & Bayes & 0.05999& 0.06016& 2.00754& \\
   	& & ACI & (0.0233,0.0958)& (0.0233,0.0958)& (0.902,3.144)& \\
   	& & & [0.07254]& [0.07250]& [2.24271]& \\
   	& & HPD & (0.0369,0.0854)& (0.0345,0.0831)& (2,2.03)& \\
   	& & & [0.04851]& [0.04866]& [0.03095]& \\[2 mm]
   	& II & MLE & 0.05594& 0.05608& 2.33103& 0.2739495& 5.582$\times 10^{-9}$& 1.4$\times 10^{4}$ \\
   	& & Bayes & 0.05605& 0.05550& 2.33775& \\
   	& & ACI & (0.0241,0.088)& (0.0241,0.088)& (1.31,3.36)& \\
   	& & & [0.06399]& [0.06404]& [2.04974]& \\
   	& & HPD & (0.0325,0.0791)& (0.0329,0.0782)& (2.32,2.36)& \\
   	& & & [0.046649]& [0.04525]& [0.042961]& \\[2 mm]
   	& III & MLE & 0.05605& 0.05608& 2.33103& 0.2739495& 5.582$\times 10^{-9}$& 1.4$\times 10^{4}$ \\
   	& & Bayes & 0.05623& 0.05606& 2.32454& \\
   	& & ACI & (0.0241,0.088)& (0.0241,0.0881)& (1.31,3.36)& \\
   	& & & [0.06399]& [0.06404]& [2.04974]& \\
   	& & HPD & (0.0332,0.0791)& (0.0363,0.0812)& (2.3,2.34)& \\
   	& & & [0.04588]& [0.04481]& [0.03871]& \\ [2 mm]
   	II & I & MLE & 0.08804& 0.08807& 0.68029& 0.04934039& 5.074$\times 10^{-9}$& 6216.268 \\
   	& & Bayes & 0.08981& 0.08903& 0.66499& \\
   	& & ACI & (0.0506,0.1255)& (0.0506,0.1255)& (0.248,1.112)& \\
   	& & & [0.07487]& [0.07490]& [0.86427]& \\
   	& & HPD & (0.0537,0.1258)& (0.0546,0.1277)& (0.627,0.703)& \\
   	& & & [0.07208]& [0.07308]& [0.07582]& \\ [2 mm]
   	& II & MLE & 0.08229& 0.08235& 0.99344& 0.05427762& 4.277$\times 10^{-9}$& 7105.881 \\
   	& & Bayes & 0.08144& 0.08208& 1.01641& \\
   	& & ACI & (0.0475,0.1171)& (0.0475,0.1172)& (0.539,1.447)& \\
   	& & & [0.06957]& [0.06963]& [0.90794]& \\
   	& & HPD & (0.0512,0.1162)& (0.0479,0.1119)& (0.984,1.051)& \\
   	& & & [0.06497]& [0.06398]& [0.06702]& \\ [2 mm]
   	& III & MLE & 0.08229& 0.08235& 0.99344& 0.05427762& 4.277$\times 10^{-9}$& 7105.881 \\
   	& & Bayes & 0.08314& 0.08301& 0.98484& \\
   	& & ACI & (0.0475,0.1171)& (0.0475,0.1172)& (0.539,1.447)& \\
   	& & & [0.06957]& [0.06963]& [0.90794]& \\
   	& & HPD & (0.0513,0.115)& (0.0509,0.1149)& (0.966,1.003)& \\
   	& & & [0.06367]& [0.06404]& [0.03612]& \\ [2 mm] 
   \end{longtable}}
   \begin{figure}[h!]
   	\centering
   	\subfigure[]{\includegraphics[width=0.32\textwidth,height=0.31\textwidth]{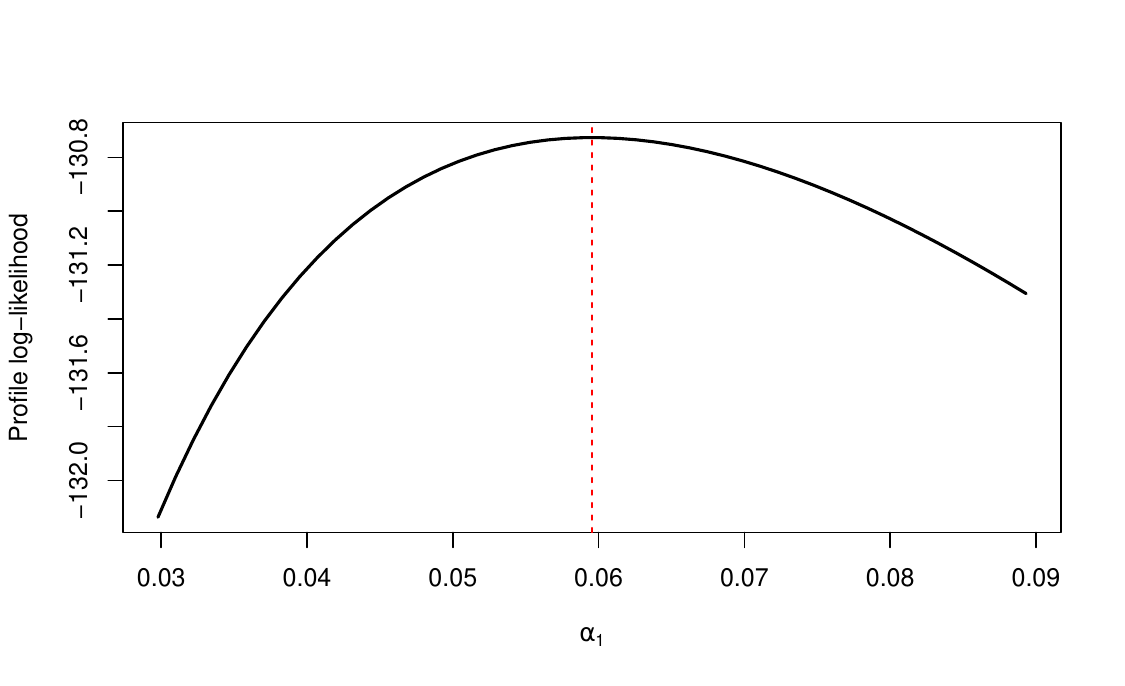}}
   	\subfigure[]{\includegraphics[width=0.32\textwidth,height=0.31\textwidth]{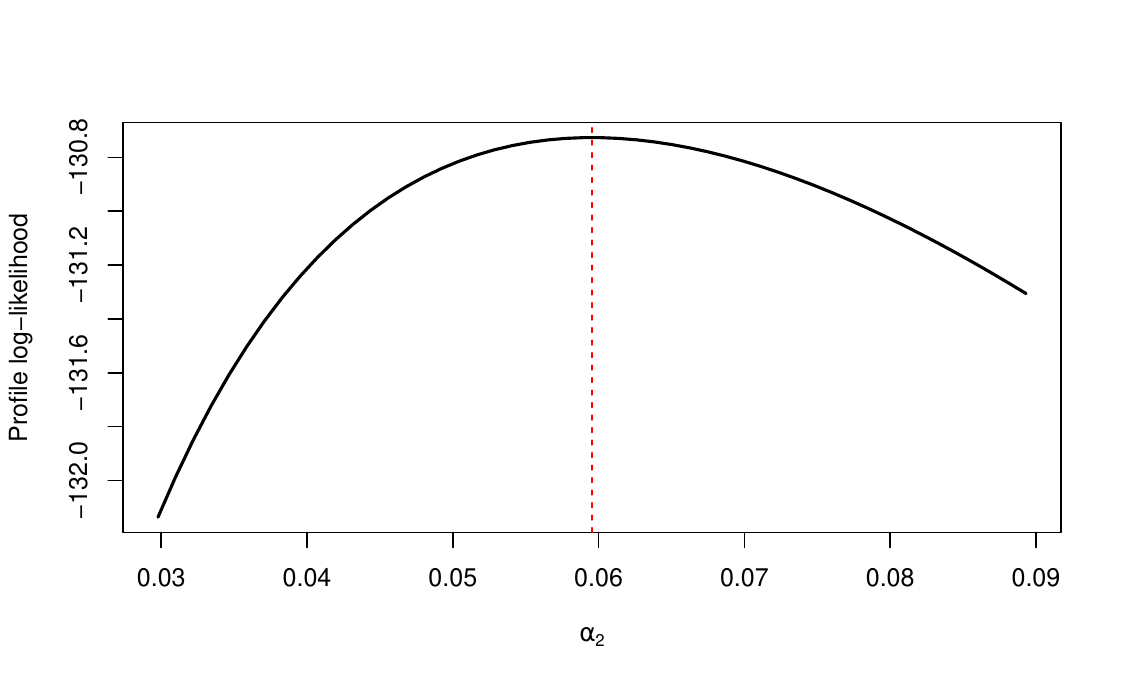}}
   	\subfigure[]{\includegraphics[width=0.32\textwidth,height=0.31\textwidth]{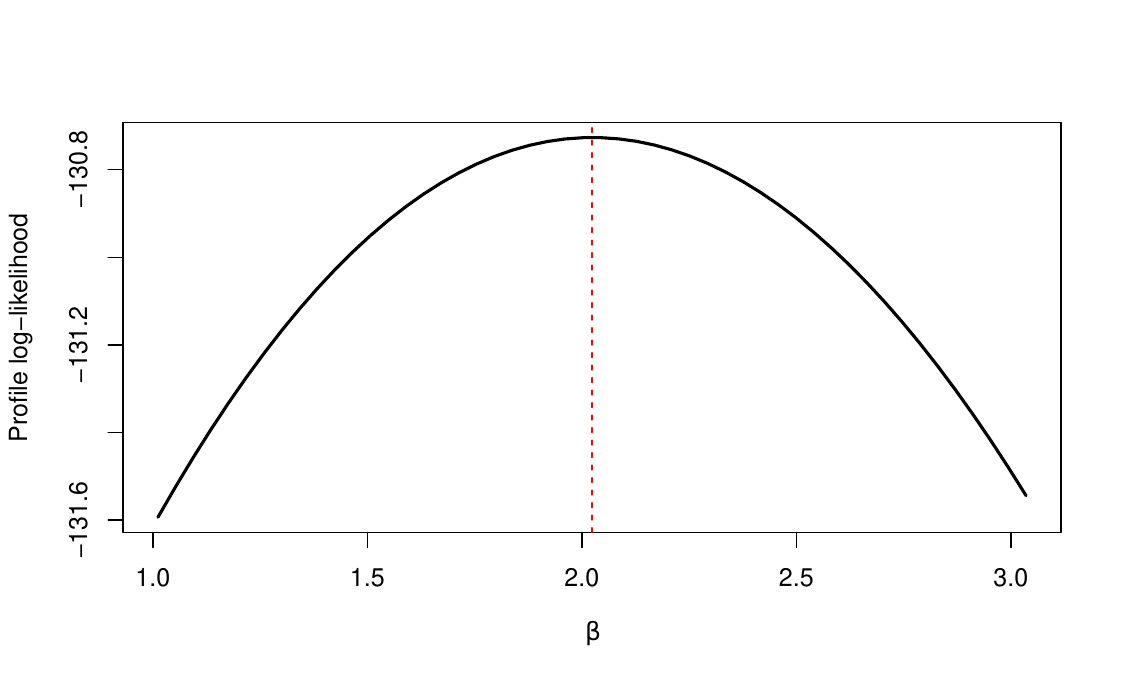}}
   	\subfigure[]{\includegraphics[width=0.32\textwidth,height=0.31\textwidth]{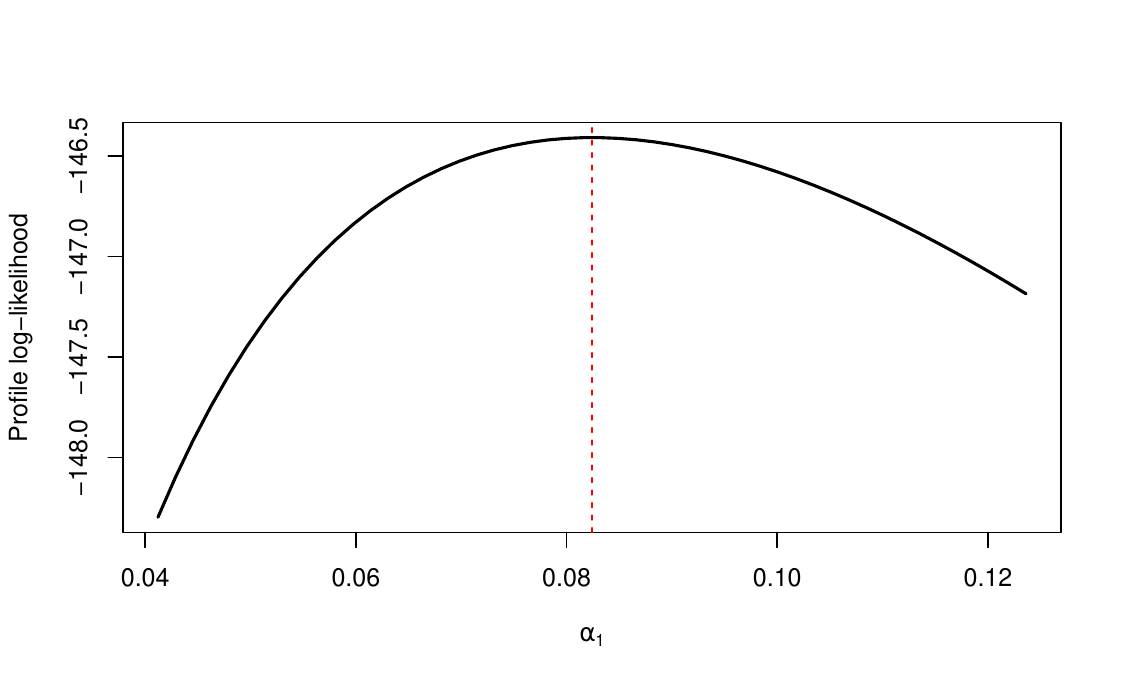}}
   	\subfigure[]{\includegraphics[width=0.32\textwidth,height=0.31\textwidth]{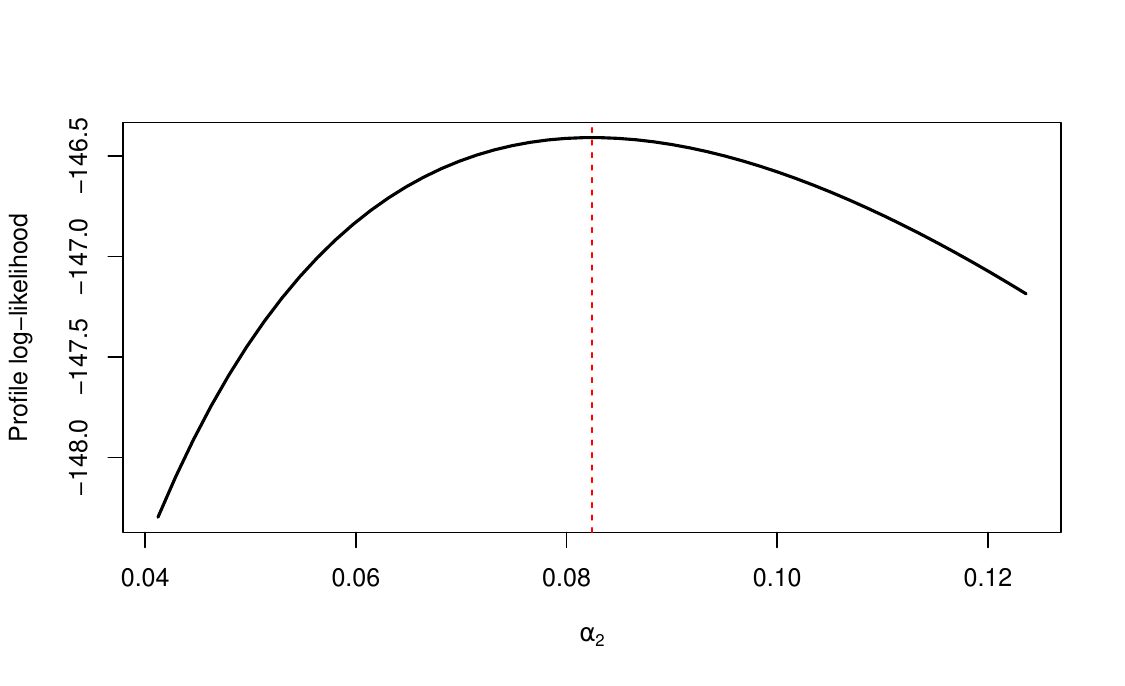}}
   	\subfigure[]{\includegraphics[width=0.32\textwidth,height=0.31\textwidth]{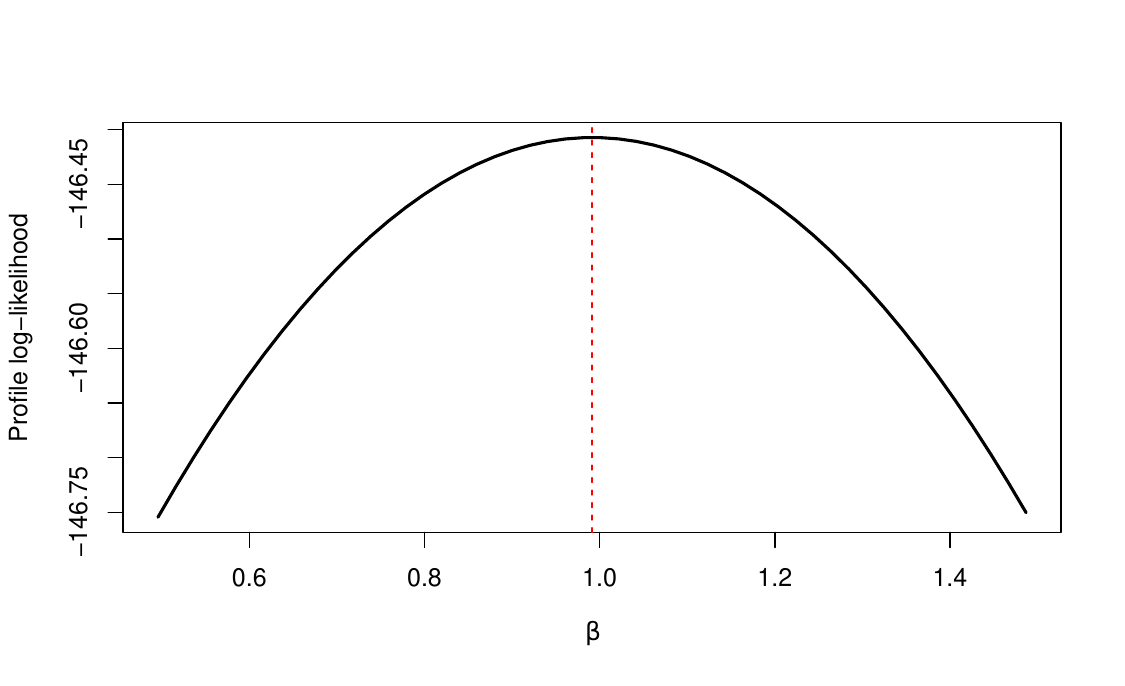}}
   	\caption{Profile log-likelihood plot for (a) $\alpha_{1}$, (b) $\alpha_{2}$, (c) $\beta$ under CS-I for Data I; profile log-likelihood plot for (d) $\alpha_{1}$, (e) $\alpha_{2}$, (f) $\beta$ under CS-III for Data-II.}
   	\label{fig:fig4}
   \end{figure}
   \begin{figure}[h!]
   	\centering
   	\subfigure[]{\includegraphics[width=0.32\textwidth,height=0.31\textwidth]{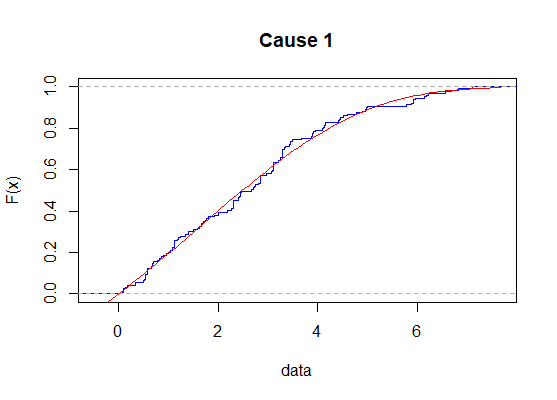}} 
   	\subfigure[]{\includegraphics[width=0.32\textwidth,height=0.31\textwidth]{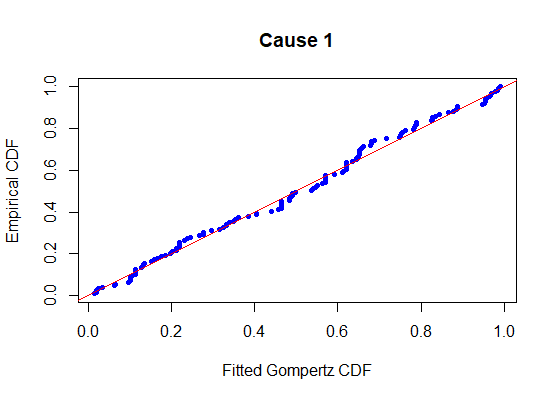}} 
   	\subfigure[]{\includegraphics[width=0.32\textwidth,height=0.31\textwidth]{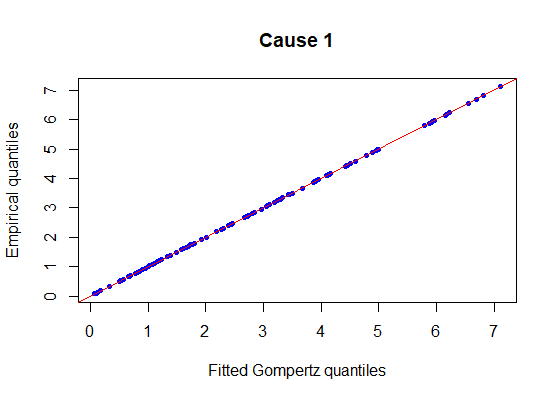}}
   	\subfigure[]{\includegraphics[width=0.32\textwidth,height=0.31\textwidth]{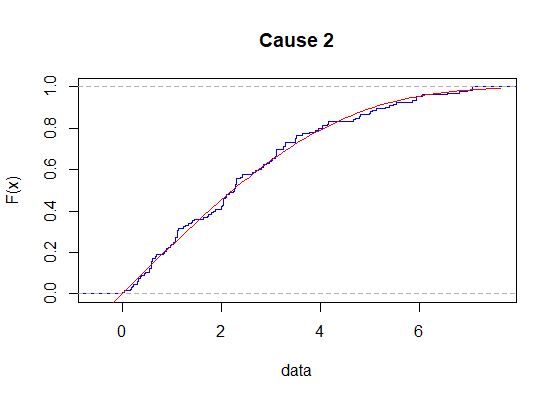}}
   	\subfigure[]{\includegraphics[width=0.32\textwidth,height=0.31\textwidth]{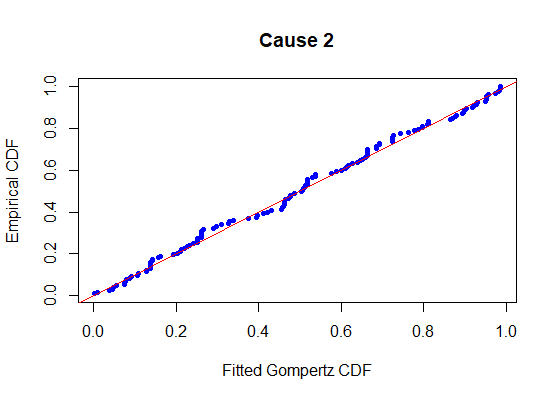}}
   	\subfigure[]{\includegraphics[width=0.32\textwidth,height=0.31\textwidth]{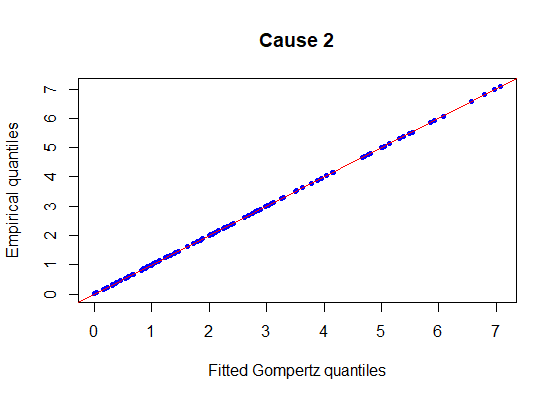}}
   	\caption{ECDF, P-P and Q-Q plot for Briga' data: (a) ECDF under cause 1, (b) P-P plot under cause 1, (c) Q-Q plot under cause 1, (d) ECDF under cause 2, (e) P-P plot under cause 2, (f) Q-Q plot under cause 2.}
   	\label{fig:fig5}
   \end{figure}
   
   \section{Conclusion}
   In the presence of competing risks data, statistical inferences have been made in this study using both classical and Bayesian approaches under IAT-II PCS. Here, we have taken two causes of failures. It is assumed that the latent failure times due to two different and independent causes are following Gompertz lifetime distribution with different shape and scale parameters. The MLEs have been derived and it is established that they exist uniquely. Further, We have derived confidence intervals for the unknown parameters using the asymptotic normality property of MLEs. Also, we have obtained Bayes estimates under both non-informative and informative priors. Three loss functions (SELF, LLF, GELF) have been used. We have obtained HPD credible intervals from posterior density functions. Three different optimality criteria such as $A$, $D$ and $F$ have been calculated to obtain the optimal censoring plan. In simulation studies, it is observed that Bayes estimates perform better than MLEs and the Bayes estimates with respect to IP perform better Bayes estimates with respect to NIP. Also, the HPD credible intervals perform better than asymptotic confidence intervals in terms of AL and CP. To demonstrate this phenomena, we have also examined a real-life data set for for the purpose of illustrations.
   \section*{Conflicts of interest}
   The authors declare no conflict of interest.
\bibliography{ref1} 
\end{document}